\theoremstyle{plain}
\newtheorem{theorem}{Theorem}[section]
\newtheorem{lemma}[theorem]{Lemma}
\newtheorem{claim}[theorem]{Claim}
\theoremstyle{definition}
\theoremstyle{remark}
\newtheorem{remark}{Remark}
\def\BState{\State\hskip-\ALG@thistlm}
\newcommand{\postcov}{\Gamma_{\!\text{\scriptsize post}} }
\newcommand{\prcov} {\Gamma_{\!\text{\scriptsize prior}} }
\newcommand{\ncov}{\Gamma_{\!\text{\scriptsize noise}}}
\newcommand{\parm}{u_0}
\newcommand{\parvec}{\hat{u}_0}
\newcommand{\indom}{\Omega_{\!\text{\scriptsize in}}}
\newcommand{\obs}{u}
\newcommand{\obsvec}{\hat{u}}
\newcommand{\pto}{\mathcal{F}}
\newcommand{\outdom}{\Omega_{\!\text{\scriptsize out}}}
\newcommand{\R}{\mathbb{R}}
\newcommand{\diag}{\operatorname{diag}}
\newcommand{\mO}{\mathcal{O}}
\newcommand{\prcovI}{\sigma^2_{\!\text{\scriptsize prior}}\mathrm{I}_m}
\newcommand{\ncovI}{\sigma^2_{\!\text{\scriptsize noise}}\mathrm{I}_n}
\newcommand{\prsigma}{\sigma^{-2}_{\!\text{\scriptsize prior}}}
\newcommand{\nsigma}{\sigma^{-2}_{\!\text{\scriptsize noise}}}
\newcommand{\rsigma}{\sigma^2_{\!\text{\scriptsize noise}}/\sigma^2_{\!\text{\scriptsize prior}}}
\newcommand{\Neach}{N_{\!\text{\scriptsize each}}}
\newcommand{\neach}{n_{\!\text{\scriptsize each}}}
\begin{document}



\title{Solving Optimal Experimental Design with Sequential Quadratic Programming and Chebyshev Interpolation}

\author{
\name{Jing Yu\textsuperscript{a}$^{\ast}$\thanks{$^\ast$Corresponding author. Email: jingyu621@uchicago.edu}
and Mihai Anitescu\textsuperscript{a}\textsuperscript{b}}
\affil{\textsuperscript{a}The University of Chicago, 5747 S. Ellis Avenue, Chicago, IL 60637 USA; \break
\textsuperscript{b}Argonne National Laboratory, 9700 South Cass Ave, Lemont IL 60439 USA.}
\received{July 2019}
}

\maketitle

\begin{abstract}
We propose an optimization algorithm to compute the optimal sensor locations in experimental design in the formulation of Bayesian inverse problems, where the parameter-to-observable mapping is described through an integral equation and its discretization results in a continuously indexed matrix whose size depends on the mesh size $n$. By approximating the gradient and Hessian of the objective design criterion from Chebyshev interpolation, we solve a sequence of quadratic programs and achieve the complexity $\mO(n\log^2(n))$. An error analysis guarantees the integrality gap shrinks to zero as $n\to\infty$, and we apply the algorithm on a two-dimensional advection-diffusion equation, to determine the LIDAR's optimal sensing directions for data collection.
\end{abstract}

\begin{keywords}
optimal experimental design; sequential quadratic program; Chebyshev interpolation; advection-diffusion equation; sum-up rounding;
\end{keywords}

\begin{classcode}
65C60, 90-08, 65C20
\end{classcode}

\section{Introduction}\label{sec:intro}

An important branch of experimental design attempts to compute the optimal sensor locations given a set of available measurement points (\citep[\S7.5]{boyd} and \citep[\S9, \S12]{puke}), with the aim to give the most accurate estimation of parameters or maximize the information about a system. Naturally it arises in many infrastructure networks (oil, water, gas, and
electricity) in which large amounts of sensor data need to be processed in real time in order to reconstruct the state of the system or to identify leaks, faults, or attacks (see \citep{sandia2}, \citep{krausewater}, \citep{sandia3} and \citep{yu2018scalable}).

We consider the same setting as in \citep{sur_oed}, where the statistical setup consists of a Bayesian framework for linear inverse problems for which the direct relationship is described by discretized integral equation. Specifically, the parameter $\parm$ is a function that maps the input domain $\indom$ to $\R$, the observable $\obs$ is a function from the output domain $\outdom$ to $\R$, and the parameter-to-observable mapping $\pto:\parm\to\obs$ is given by an integral equation:
\begin{equation}\label{intro_eq:pto}
\obs(x) = \int_{\indom} f(x, y)\parm(y)\,\mathrm{d}y, \quad\mbox{for }x\in\outdom.
\end{equation}
Both $\indom$ and $\outdom$ are rectangular domains. The two sets of points $\{x_1, x_2,\cdots,x_n\}\subset\outdom$ and $\{y_1, y_2,\cdots,y_m\}\subset\indom$ are the discretized mesh points, and the goal is to estimate the parameter vector $\parvec=(\parm(y_1),\cdots,\parm(y_m))$ as a proxy for the unknown function $\parm$. The observable vector $\obsvec=(\obs(x_1),\cdots,\obs(x_n))$ represents data points that can be potentially observed by sensors. The number of sensors is limited and we need to select the optimal locations from the candidate set $\{x_1,\cdots,x_n\}$.

Following \citep{sur_oed}, we use a Bayesian framework with Gaussian priors and likelihood, assign the weight $w_i$ to each candidate location $x_i$, and formulate the optimal sensor placement as a convex integer program:
\begin{equation}\label{intro_opt:doe}
\begin{array}{rl}
\mbox{min}_w\quad &\phi(\postcov(w))\\
\mbox{s.t.}\quad  &w_i\in\{0, 1\},\ \sum_{i=1}^n w_i=n_0
\end{array}
\end{equation}
where $\phi$ can be either A-, D- or E-optimal design criterion, corresponding to the trace, log determinant and largest eigenvalue respectively, of the posterior covariance matrix $\postcov$ for the parameter $\parvec$. $\postcov$ is computed from Bayes' rule and is given by
\begin{equation}
\postcov = (F^TW^{1/2}\ncov^{-1}W^{1/2}F + \prcov^{-1} )^{-1}.
\end{equation}
where $W=\diag(w_1,w_2,\cdots,w_n)$. $F$ is the discretized parameter-to-observable mapping, and $F(i, j) = f(x_i, y_j)\Delta y$, where $\Delta y$ is the size of unit rectangle on $\indom$, i.e. $\Delta y = \mu(\indom)/m$. It has been shown in \citep{sur_oed} that, if $\ncov$ and $\prcov$ are multiples of identity matrices ($\prcov=\prcovI, \ncov=\ncovI$), and the ratios $m/n$, $n_0/n$ are constants, then by solving the relaxed optimization
\begin{equation}\label{intro_opt:doerelax}
\begin{array}{rl}
\mbox{min}_w\quad &\phi(\postcov(w))\\
\mbox{s.t.}\quad  &0\le w_i\le1,\ \sum_{i=1}^n w_i=n_0
\end{array}
\end{equation}
and applying the \textit{sum-up rounding} (SUR) strategy (see Appendix \ref{app:sur}), the integrality gap between an upper bound of \eqref{intro_opt:doe} obtained from the SUR solution, and a lower bound of \eqref{intro_opt:doe} obtained from the relaxed solution to \eqref{intro_opt:doerelax}, converges to zero as the mesh sizes $n\to\infty$.

In this paper, we provide an optimization algorithm to solve the relaxation \eqref{intro_opt:doerelax}, based on Chebyshev interpolation and sequential quadratic programming (SQP). While the relaxed problem is not NP-hard, computing its gradient requires $\mO(n^3)$ operations, and finding the Hessian is even more expensive. Given that $f(x,y)$ typically comes from a mathematical model described by a system of partial differential equations (PDEs), and the parameters to be estimated are initial or boundary conditions, the discretization of an increasingly refined mesh can easily explode the problem size to thousands and even millions. A $\mO(n^3)$ algorithm is intractable in practice, and we need a scalable algorithm to solve \eqref{intro_opt:doerelax} both fast and accurately.

Several efficient algorithms have been proposed to alleviate the computation burden for specific design criteria (see \citep{alexanderian2014optimal, PetraMartinStadlerEtAl14, anderian2018efficient}), all of which exploit the low-rank structure of the parameter-to-observable mapping $\pto$ in some way. In \citep{alexanderian2014optimal}, randomized methods, such as randomized singular value decomposition (rSVD) and randomized trace estimator, are employed to evaluate the A-optimal design objective, i.e. the trace of $\postcov$, and its gradient. Its approximation error depends on the threshold chosen in rSVD and the sample size in randomized estimations. In \citep{anderian2018efficient}, similar approaches (truncated spectral decomposition, randomized estimators for determinants) are investigated for the D-optimal design, and the related error bounds are derived explicitly. In addition to the  numerical results, a general study on the optimal low-rank update from the prior covariance matrix to the posterior covariance matrix, is available over a broad class of loss functions (see \citep{spantini2015optimal}).

We make use of the integral operator assumption and the fact $F$ is continuously indexed i.e., $F(i,j)$ is evaluated from a smooth function $f(x,y)$, and propose an interpolation-based method to approximate both the gradient and Hessian for A- and D-optimal designs. Classical interpolation theory is well established for function approximations with function evaluations only at a subset of points, and it is known that polynomial interpolation at Chebyshev points is optimal in the minimax error for continuously differentiable functions (see \citep[\S8.5]{ greenbaum2012numerical}). We apply Chebyshev interpolation to extract the gradient and Hessian information, and implement sequential quadratic programming to compute the relaxed solution, where each quadratic program is solved by an interior-point algorithm. The advantage of SQP is threefold: in contrast to previous methods, we incorporate Hessian to accelerate the convergence rate in the optimization algorithm; we are able to prove the zero convergence of the approximation error in the objective as the problem size approaches infinity; the overall complexity is $\mO(n\log^2(n))$. 

In the numerical experiment, we apply this algorithm to a two-dimensional LIDAR problem, aiming to select the optimal sensing directions for the LIDAR to send beams and collect data along those beams, in order to infer the initial condition of the advection-diffusion equation. While the designs rely on the constants in the system, we can safely conclude more sensing directions should be chosen towards the velocity field in the equation. We demonstrate the algorithm efficiency by providing the computation time and the shrinkage of the integrality gap.

We focus on A-optimal design in the paper, and save results on D-optimal design to the Appendix, which are derived very similarly. In summary, we discuss the gradient and Hessian approximations for the trace objective in \S\ref{sec:interp}, and provide details of the SQP algorithm in \S\ref{sec:sqp}. An error analysis on the zero convergence of integrality gap is given in \S\ref{sec:error}, and in \S\ref{sec:lidar} we introduce and illustrates the designs for a LIDAR problem. Finally in \S\ref{sec:disc} we discuss ways to improve the current algorithm.

\section{Gradient and Hessian Approximations with Chebyshev Interpolation}\label{sec:interp}
Since $m/n$ is a constant, without loss of generality we assume $m=n$, and rewrite the posterior covariance matrix:
\begin{equation}
\postcov = \Big(\nsigma F^TWF + \prsigma\mathrm{I}_n \Big)^{-1} = \sigma^2_{\!\text{\scriptsize noise}}\Big(F^TWF + \alpha \mathrm{I}_n \Big)^{-1}
\end{equation}
where $\alpha = \rsigma$. We ignore the constants $\prsigma$, $\nsigma$ at the moment and add them back in the numerical section. In the A-optimal design, $\phi(\postcov) = tr(\postcov)$, and we provide the analytical form of the derivatives below, and then approximate them by exploiting the continuously indexed structure.

\subsection{Gradient and Hessian for A-optimal Design}
Here are the component-wise gradient and Hessian of the objective function $tr(\postcov)$:
\begin{itemize}
\item \textbf{Gradient}
Denote $f_i$ as the $i$-th column of $F^T$ and we have
\begin{equation*}
F^TWF = \sum_{i=1}^n w_i f_if_i^T \quad\Rightarrow\quad \frac{\partial F^TWF}{\partial w_i} = f_if_i^T.
\end{equation*}
Therefore the $i^{th}$ component in the gradient is:
\begin{equation}\label{interp_eq:grad}
\frac{\partial tr(\postcov)}{\partial w_i} = -tr\Big((F^TWF+\mathrm{I}_n)^{-1}f_if_i^T(F^TWF+\mathrm{I}_n)^{-1} \Big)= -\|(F^TWF+\mathrm{I}_n)^{-1}f_i \|^2.
\end{equation}
\item \textbf{Hessian}
Following the previous steps, the $(i,j)^{th}$ entry of Hessian matrix is:
\begin{equation}\label{interp_eq:Hessian}
H_{ij} = \frac{\partial^2 tr(\postcov)}{\partial w_i\partial w_j} = 2\Big( f_i^T(F^TWF + \mathrm{I}_n)^{-1} f_j \Big)\Big( f_i^T(F^TWF + \mathrm{I}_n)^{-2} f_j \Big).
\end{equation}
\end{itemize}
Note that $f_i$ is discretized from a smooth function $f(x_i, \cdot)$, so both $f_i$ and $F$ are continuously indexed. In addition, $H$ is the Schur product of two positive semi-definite matrices $F(F^TWF+\mathrm{I}_n)^{-1}F^T$ and $F(F^TWF+\mathrm{I}_n)^{-2}F^T$, so $H$ is also positive semi-definite (see \citep[\S1.5]{zhang2006schur}). Next we explain an approximation of the gradient and Hessian with Chebyshev interpolation.

\subsection{Chebyshev Interpolation in One Dimension}
The $N$ Chebyshev interpolation points on $[-1, 1]$ are 
\begin{equation}
\tilde{x}_i = \cos\Big(\frac{\pi (i - 1)}{N - 1}\Big), \quad i = 1,2,..,N.
\end{equation}
Given a smooth function $h: [-1, 1]\to\mathbb{R}$, and evaluations at the interpolation points $\{(\tilde{x}_1, h(\tilde{x}_1)), (\tilde{x}_2, h(\tilde{x}_2)), .., (\tilde{x}_{N}, h(\tilde{x}_{N}))\}$, then for any $x\in[-1, 1]$, we approximate $h(x)$ with Lagrange basis polynomials:
\begin{equation}
\tilde{h}(x)=\sum_{i=1}^{N} \big(\prod_{j\neq i}\frac{x - \tilde{x}_j}{\tilde{x}_i-\tilde{x}_j}\big)h(\tilde{x}_i).
\end{equation}
Chebyshev interpolation points achieve the minimal error $\|h-\tilde{h}\|_{\infty}$ among polynomial approximations, and we will discuss its accuracy in \S\ref{sec:error}. We choose the number of interpolation points as $N = \mathcal{O}(log(n))$ for both computational and accuracy purposes, and create the coefficient vector associated with $x$ as
\begin{equation}
c(x) = \Big(\prod_{j\neq 1}\frac{x - \tilde{x}_1}{\tilde{x}_i-\tilde{x}_1}, \prod_{j\neq 2}\frac{x - \tilde{x}_2}{\tilde{x}_i-\tilde{x}_2}, ..., \prod_{j\neq N}\frac{x - \tilde{x}_N}{\tilde{x}_i-\tilde{x}_N} \Big)^T\in\mathbb{R}^N.
\end{equation}

To construct a low-rank approximation of $F$, we create a matrix $\tilde{F}\in\mathbb{R}^{N\times N}$ with
\begin{equation}
\tilde{F}(i,j) = f(x_i, y_j)\Delta y
\end{equation}
where $\tilde{x}_i$ and $\tilde{y}_j$ are interpolation points in $\outdom$ and $\indom$ respectively. Define the coefficient matrix $C_x = \big(c_x(x_1), c_x(x_2),...,c_x(x_n)\big)\in\mathbb{R}^{N\times n}$ for each mesh point $x_i\in\indom$, and define $C_y$ in similar way. Then $F$ is approximated by  

\begin{equation}\label{interp_eq:F_s}
F_s:=C_x^T\tilde{F}C_y.
\end{equation}

To approximate the gradient and Hessian in \eqref{interp_eq:grad} and \eqref{interp_eq:Hessian}, we construct $M\in\mathbb{R}^{n\times N}$ with its $i$-th column $m_i$ given by
\begin{equation}\label{interp_eq:col_F}
\Big(F_s^TWF_s + \mathrm{I}_n\big)^{-1}\tilde{f}_i,
\end{equation}
where $\tilde{f}_i$ is the $i^{th}$ column of $C_y^T\tilde{F}^T$, as an approximation of the column in $F^T$ evaluated at $\tilde{x}_i$. 
We then define $M_1, M_2\in\mathbb{R}^{N\times N}$ where the $(i,j)^{th}$ entry is $\langle\tilde{f}_i, m_j\rangle$ and $\langle m_i, m_j\rangle$ respectively, that is, for $i,j=1,2,...,N$,
\begin{equation}
M_1(i,j) = \tilde{f}_i^T\Big(F_s^TWF_s + \mathrm{I}_n\big)^{-1}\tilde{f}_j, \quad M_2(i,j) =  \tilde{f}_i^T\Big(F_s^TWF_s+ \mathrm{I}_n\big)^{-2}\tilde{f}_j.
\end{equation}

\begin{itemize}
\item \textbf{Approximate gradient in \eqref{interp_eq:grad}.} Let $g\in\mathbb{R}^n$ be the true gradient, i.e.
\begin{equation}
g = (\frac{\partial \phi}{\partial w_1}, \frac{\partial \phi}{\partial w_2}, ..., \frac{\partial \phi}{\partial w_n})^T.
\end{equation}
and we approximate each component by $g_i\approx -c_x(x_i)^TM_2c_x(x_i)$.

\item \textbf{Approximate Hessian in \eqref{interp_eq:Hessian}.} We construct another matrix $\tilde{H}$ where $\tilde{H}(i,j) = 2M_1(i,j)M_2(i,j)$, then $H(i,j)$ is approximated by $c_x(x_i)^T \tilde{H}c_x(x_j)$. Equivalently,
\begin{equation}
H\approx H_s = C_x^T \tilde{H} C_x.
\end{equation}
Note $\tilde{H}$ is the Schur product of two positive semi-definite matrices $M_1$ and $M_2$, so $H_s$ is positive semi-definite (see \citep[\S1.5]{zhang2006schur}).
\end{itemize}

The above interpolation-based approximation can be generalized for any interval domain $[a, b]$ by defining a one-to-one mapping between $[a, b]$ and $[-1, 1]$.

\subsection{Chebyshev Interpolation in Two Dimensions} 
After we understand the interpolation approximation in one dimension, it is not difficult to extend it to multiple dimensions by tensor product, although the notation would be slightly more complicated. 

Consider the domain $\Omega = [-1,1]\times[-1,1]$, and let $\neach$ and $\Neach$ be the number of mesh points and interpolation points respectively on each side. We have $n=\neach^2$ mesh points and $N=\Neach^2$ interpolation points in total and they are related by
\begin{equation}
\Neach^2 = N   = \mathcal{O}(log(n)) = \mathcal{O}\big(log(\neach^2)\big).
\end{equation}

Suppose $\{(x_i, x_j)\}_{i,j = 1}^{\neach}$ are mesh points, and $\{(\tilde{x}_i, \tilde{x}_j)\}_{i,j = 1}^{\Neach}$ are interpolation points, and we construct $\tilde{F}\in\mathbb{R}^{N\times N}$ in a similar way as in one dimension. The $n$ mesh points are ordered as follows: for an index $k\in\{1,2,..,n\}$, we write 
$$k = (i-1)\cdot \neach + (j-1),$$ 
and it corresponds to the mesh point $(x_i, x_j)$ in $\Omega$. In other words, we arrange these mesh points ``column by column'', and the index $k$ is associated with  $(x_i, x_j)$. We apply the same ordering to interpolation points. Next we find the coefficient vector $c(x_i,x_j)\in\mathbb{R}^N$, i.e. how a general function $h(x_i,x_j)$ depends on the values at interpolation points. Based on results from one dimension, let $c(x_i), c(x_j)\in\mathbb{R}^{\Neach}$ be the one-dimensional coefficient vector for $x_i$ and $x_j$, and $k\in\{1,2,..,N\}$ with
\begin{equation}
k = (k_1-1)\cdot \Neach + (k_2-1),
\end{equation}
then the $k^{th}$ coefficient is given by
\begin{equation}
c(x_i, x_j)_k = c(x_i)_{k_1}c(x_j)_{k_2}.
\end{equation}
The $k^{th}$ component in $c(x_i, x_j)\in\mathbb{R}^N$ is the product of $k_1^{th}$ component in $c(x_i)$ and $k_2^{th}$ component in $c(x_j)$. We calculate the coefficient vector for each mesh point, and create matrices $C$, $M$, $M_1$ and $M_2$ in a similar fashion (details omitted). Gradient and Hessian are approximated in the same way as in one dimension.

\section{Sequential Quadratic Program}\label{sec:sqp}

Given the (approximated) gradient and Hessian in the previous section, we solve a sequence of quadratic program until convergence, where at each step, the objective is a quadratic Taylor polynomial evaluated at the current iterate. We mainly adopt the algorithm from \citep[\S 18.1]{Nocedal_book}, and before that, we instead look at an equivalent version to the relaxed optimization \eqref{intro_opt:doerelax}:

\begin{equation}\label{algo_eq:doe2}
\begin{array}{rl}
\mbox{min}\quad &\phi(\Gamma_{post}(w))\\
\mbox{s.t.}\quad  &0\le w_i\le 1,\ \sum_{i=1}^n w_i\le n_0.
\end{array}
\end{equation}
\begin{claim}
This program and the original program has the same minimal point.
\end{claim}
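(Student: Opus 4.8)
The plan is to show that the feasible regions of the two relaxations, call them $P_{=} = \{w : 0 \le w_i \le 1, \sum_i w_i = n_0\}$ and $P_{\le} = \{w : 0 \le w_i \le 1, \sum_i w_i \le n_0\}$, are related in such a way that any minimizer of $\phi(\postcov(w))$ over $P_{\le}$ automatically lies on the face $\sum_i w_i = n_0$, hence in $P_{=}$; since $P_{=} \subseteq P_{\le}$ the two optima then coincide. The key observation is monotonicity: adding more weight only adds information, so the objective can only decrease. Concretely, for each $i$ the gradient component in \eqref{interp_eq:grad} is $\partial tr(\postcov)/\partial w_i = -\|(F^TWF+\mathrm{I}_n)^{-1}f_i\|^2 \le 0$, so $\phi$ is nonincreasing in each coordinate $w_i$ on the whole box.

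First I would take any $w^\star \in P_{\le}$ with $\sum_i w_i^\star = s < n_0$. Because $s < n_0 \le n$ and each $w_i^\star \le 1$, there is at least one index $i$ with $w_i^\star < 1$; increasing that coordinate by $\min\{1 - w_i^\star,\, n_0 - s\} > 0$ keeps the point feasible for $P_{\le}$ and, by the sign of the gradient above (or, more elementarily, by the Loewner-monotonicity $W' \succeq W \Rightarrow F^TW'F + \alpha \mathrm{I} \succeq F^TWF + \alpha \mathrm{I} \Rightarrow \postcov(w') \preceq \postcov(w)$, and $\phi \in \{tr, \log\det, \lambda_{\max}\}$ is monotone under the Loewner order), does not increase $\phi(\postcov)$. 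Iterating this at most $n$ times produces a point $\bar{w} \in P_{=}$ with $\phi(\postcov(\bar{w})) \le \phi(\postcov(w^\star))$. Hence $\min_{P_{\le}} \phi \ge \min_{P_{=}} \phi$; the reverse inequality is immediate from $P_{=} \subseteq P_{\le}$, so the two minima are equal, and any $P_{=}$-minimizer is a $P_{\le}$-minimizer.

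To get the stronger statement that they have the \emph{same} minimal point, I would additionally invoke strict monotonicity / convexity: the gradient component $-\|(F^TWF+\mathrm{I}_n)^{-1}f_i\|^2$ is strictly negative whenever $f_i \neq 0$ (which holds for all $i$ since $f(x_i,\cdot)$ is a nontrivial smooth kernel), so the descent step above is a \emph{strict} decrease unless no coordinate can be increased — i.e. unless $w^\star$ already satisfies $\sum_i w_i^\star = n_0$. Therefore no point of $P_{\le} \setminus P_{=}$ can be optimal, and the optimal set of \eqref{algo_eq:doe2} is contained in (indeed equals) that of \eqref{intro_opt:doerelax}.

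The main obstacle is being careful about which monotonicity one actually needs: the clean argument uses only that each $\phi$ under consideration is monotone nonincreasing under the Loewner order on positive-definite matrices (true for $tr$, $\log\det$ and $\lambda_{\max}$), together with the fact that $W \mapsto F^TWF$ is monotone for $w \ge 0$; the gradient formula \eqref{interp_eq:grad} gives the A-optimal case directly but the statement is claimed for all three criteria, so I would state the Loewner-monotonicity version once and note it covers E- and D-optimal as well. A minor point to check is the edge case $n_0 = n$, where $P_{=}$ is the single vertex of the box and the claim is trivial, and $n_0 < n$, which is the standing assumption and the only case where the "room to increase a coordinate" argument is needed.
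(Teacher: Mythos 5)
Your proposal is correct and follows essentially the same route as the paper, whose entire proof is the single observation that $w\preceq w'$ implies $\big(F^TWF+\mathrm{I}_n\big)^{-1}\succeq\big(F^TW'F+\mathrm{I}_n\big)^{-1}$, i.e.\ the Loewner monotonicity you invoke. You simply make explicit the coordinate-increasing argument and the strictness discussion that the paper leaves implicit.
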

\begin{proof}
If $w\preceq w'$ ($w_i\le w'_i$ for each $i$), then $\Big( F^TWF + \mathrm{I}_n \Big)^{-1}\succeq\Big( F^TW'F + \mathrm{I}_n \Big)^{-1}$.
\end{proof}

\subsection{A Framework for SQP}

We give the details on the SQP algorithm: suppose at the $k$-th iteration, $(w^k, \lambda^k)$ are the primal and dual variable, we solve the following quadratic program
\begin{equation}\label{algo_eq:quadprog}
\begin{array}{rl}
\mbox{min}\quad &\phi_k + \nabla\phi_k^T\cdot p^k + \frac{1}{2}(p^k)^T\cdot\nabla^2_{ww}\mathcal{L}_k\cdot p^k\\
\mbox{s.t.}\quad  &-w^{k}_i\le p^k_i\le 1-w^{k}_i,\ , i= 1,2,...,n,\\
& \sum_{i=1}^n p^k_i\le n_0 - \sum_{i=1}^nw^k_i
\end{array}
\end{equation}
where $\nabla\phi_k$ is the gradient of the objective and $\nabla^2_{ww}\mathcal{L}_k$ is the Hessian of the Lagrangian, both evaluated at the current iterate $w^k$. As there are only linear constraints, we have $\nabla^2_{ww}\mathcal{L}_k = \nabla^2_{ww}\phi_k = H_k$. The problem \eqref{algo_eq:quadprog} can be simplified as:
\begin{equation}\label{eq:quadprog}
\begin{array}{rl}
\mbox{min}\quad & g^T\cdot p^k + \frac{1}{2}(p^k)^T\cdot H\cdot p^k\\
\mbox{s.t.} \quad & A\cdot p^k \ge b.
\end{array}
\end{equation}
where $g =  \nabla\phi_k$, $H = \nabla^2_{ww}\phi_k$, $A=\begin{pmatrix}
\mathrm{I}_n \\ -\mathrm{I}_n \\ -\bm{1}^T
\end{pmatrix}\in\mathbb{R}^{(2n+1)\times n}$, $b = \begin{pmatrix}
-w^k\\ w^k - \bm{1}\\ \bm{1}^Tw^k - n_0
\end{pmatrix}\in\mathbb{R}^{2n+1}$. 

\vspace*{0.2cm}
Both $g$ and $H$ are from Chebyshev approximations. The new iterate $w^{k+1}$ is updated by $w^k + \alpha_k p^k$ where $p^k$ is the solution to the quadratic program \eqref{eq:quadprog}, and $\alpha_k$ is the step length determined by backtracking line search (see \citep[Algorithm 3.1]{Nocedal_book}). We discuss details on solving the program \eqref{eq:quadprog} and getting its Lagrangian multipliers in the next subsection, but provide the SQP framework now in Algorithm \ref{sqp_algo:sqp}. 

\vspace*{0.4cm}
\begin{algorithm}
\caption{SQP with line search for Solving \eqref{algo_eq:doe2} ($c = 0.5, \xi = 10^{-3}$, $\epsilon$ is user defined)}\label{sqp_algo:sqp}
\begin{algorithmic}[1]
\BState \emph{choose an initial state $(w^0, \lambda^0)$; set $k\gets 0$}
\BState \textbf{repeat} until $\phi(\postcov(w^k)) - \phi(\postcov(w^{k+1})) <\epsilon$\label{algo_algo:stop_kkt}
\State $\quad$ evaluate $\nabla\phi_k, \nabla^2_{ww}\phi_k$ from Chebyshev interpolation;
\State $\quad$ solve the quadratic program \eqref{eq:quadprog} to obtain $(p^k, \lambda^{k+1})$;
\State $\quad\ \alpha_k = 1$
\State $\quad$ \textbf{while} $\phi(\Gamma_{post}(w^k + \alpha p^k)) > \phi(\Gamma_{post}(x^k)) + \xi\alpha_k\nabla\phi_k^Tp^k$
\State $\quad\quad$ $\alpha_k = c * \alpha_k$ 
\State $\quad$ \textbf{end (while)}
\State $\quad$ set $w^{k+1}\gets w^k+\alpha_k\cdot p^k$, $\lambda^{k+1}\gets \lambda^k+\alpha_k(\lambda^{k+1}-\lambda^k)$;
\BState \textbf{end (repeat)};
\end{algorithmic}
\end{algorithm}

In the backtracking line search step, we need to evaluate $\phi(\Gamma_{post}(w))$ which involves the trace of an inverse matrix of size $n\times n$, and we propose a SVD-based method with complexity $\mathcal{O}(n\log^2(n))$ for the evaluation. Recall that $F_s=C_x^T\tilde{F}C_y$ in \eqref{interp_eq:F_s} where $C_x, C_y\in\mathbb{R}^{N\times n}$, $\tilde{F}\in\mathbb{R}^{N\times N}$ and $N=\mathcal{O}(\log(n))$, then
\begin{equation}
F_s^TWF_s = C_y^T\tilde{F}^TC_xWC_x^T\tilde{F}C_y.
\end{equation}
To compute $\phi(\Gamma_{post}(w))$, we only need to find the eigenvalues $\{\lambda_i\}_{i=1}^n$ of $F_s^TWF_s$ since
\begin{equation}
\phi(\Gamma_{post}(w)) = tr\big((F_s^TWF_s+\mathrm{I}_n)^{-1}\big)=\sum_{i=1}^n\frac{1}{1 + \lambda_i}.
\end{equation}
We apply SVD decompositions to both $C_y^T\tilde{F}\in\mathbb{R}^{n\times N}$ and $C_xW^{1/2}\in\mathbb{R}^{N\times n}$, and get
\begin{equation}
C_y^T\tilde{F} = U_1\Sigma_1 V_1^T,\ C_xW^{1/2} = U_2\Sigma_2 V_2^T\quad\Rightarrow\quad F_s^TWF_s = U_1\Big(\Sigma_1 V_1^TU_2\Sigma_2^2U_2^TV_1\Sigma_1\Big)U_1^T.
\end{equation}
Because $\Sigma_1 V_1^TU_2\Sigma_2^2U_2^TV_1\Sigma_1\in\mathbb{R}^{N\times N}$ is of small size, another SVD decomposition (or eigenvalue decomposition) of this matrix directly gives us the eigenvalue decomposition of $F_s^TWF_s$, and thus the value of $\phi(\Gamma_{post}(w))$. Moreover, once we know $F_s^TWF_s = Q\Lambda Q^T$, where $Q\in\R^{n\times N}$ has orthonormal columns and $\Lambda\in\R^{N\times N}$ is diagonal, then
\begin{equation}
\Big( F_s^TWF_s + \mathrm{I}_n \Big)^{-1} = \mathrm{I}_n - Q\tilde{\Lambda}Q^T
\end{equation}
and $\tilde{\Lambda}_i=\lambda_i/(1 + \lambda_i)$. It only requires matrix vector products with a cost of $\mO(n\log(n))$ to compute \eqref{interp_eq:col_F} in the construction of Chebyshev approximation,
\begin{equation}
\Big(F_s^TWF_s + \mathrm{I}_n\big)^{-1}\tilde{f}_i = \tilde{f}_i -  Q\tilde{\Lambda}Q^T\tilde{f}_i.
\end{equation}

\subsection{Solve QP with Interior Point Method}

In this subsection, we focus on solving the QP \eqref{eq:quadprog} with an interior point method, following the procedure in \cite[\S 16.6]{Nocedal_book}. First we introduce slack variable $s\succeq0$ and write down the KKT condition for \eqref{eq:quadprog}:
\[
\begin{cases}
&H\cdot p^k + g - A^T\lambda = 0\\
&A\cdot p^k - s - b = 0\\
&s_i\cdot\lambda_i = 0, \quad i=1,2,...,2n+1\\
&(s,\ \lambda)\succeq0.
\end{cases}
\]
We then define a complementarity measure $\mu = s^T\cdot \lambda/ (2n+1)$, and solve a linear system:
\begin{equation}\label{ip_biglinsys}
\begin{pmatrix}
H & 0 & -A^T\\
A & -I & 0\\
0 & \Lambda & S
\end{pmatrix}
\begin{pmatrix}
\Delta p^k\\ \Delta s\\ \Delta \lambda
\end{pmatrix}
=
\begin{pmatrix}
-r_d\\
-r_p\\
-\Lambda\cdot S\bm{1} + \sigma\cdot\mu\bm{1}
\end{pmatrix}
\end{equation}
where $$r_d = H\cdot p^k-A^T\lambda+g, \quad r_p=A\cdot p^k - s - b$$
and 
$$\Lambda = \diag(\lambda_1,..,\lambda_{2n+1}), \quad S = \diag(s_1,..,s_{2n+1}), \quad \bm{1} = (1,1,..,)^T.$$ A more compact ``normal equation'' form of the system \eqref{ip_biglinsys} is
\begin{equation}\label{ip_linsys}
\Big( H+A^TS^{-1}\Lambda A\Big)\Delta p^k = -r_d + A^TS^{-1}\Lambda\big( -r_p-s+\sigma\mu\Lambda^{-1}\bm{1}\big)
\end{equation}
Next we solve the linear system \eqref{ip_linsys}. Note that once $\Delta p^k$ is known, $\Delta s$ and $\Delta\lambda$ can be derived easily. Let $S^{-1}\Lambda=\diag(d_1, d_2,..,d_{2n+1})$, we have from \eqref{eq:quadprog} that
\begin{equation}
A^TS^{-1}\Lambda A = D + d_{2n+1}\cdot\bm{1}\cdot\bm{1}^T
\end{equation}
where $D=\diag(d_1+d_{n+1}, d_2+d_{n+2},..,d_n+d_{2n})$. We apply Sherman-Morrison formula to calculate $(H + D + d_{2n+1}\cdot\bm{1}\cdot\bm{1}^T)^{-1}$. Since $H\approx H_s=C^T\tilde{H}C$ and it is less expensive to compute $\tilde{H}^{-1}$, we have
\begin{equation}\label{ip_inv1}
\Big(H_s + D\Big)^{-1} = \Big(C^T\tilde{H}C + D\Big)^{-1} = D^{-1} - D^{-1}C^T\Big( \tilde{H}^{-1} + CD^{-1}C^T \Big)^{-1}CD^{-1}.
\end{equation}
Note $\tilde{H}$ is positive semi-definite, but not necessarily positive definite. We use a truncated eigenvalue decomposition of $\tilde{H}$ and only considers eigenvalues above a threshold, and then compute the Moore-Penrose inverse. Next let $X:=H_s + D$, and we apply \eqref{ip_inv1} to get
\begin{equation}
(H + D + d_{2n+1}\cdot\bm{1}\cdot\bm{1}^T)^{-1}=\Big(X + d_{2n+1}\bm{1}\cdot\bm{1}^T \Big)^{-1} = X^{-1}  - X^{-1} \bm{1}\bm{1}^TX^{-1}/\Big( \bm{1}^TX^{-1}\bm{1} + d_{2n+1}^{-1} \Big).
\end{equation}
To summarize, we only need $\tilde{H}^{-1}$ and matrix vector products to solve for $\Delta p^k$ in \eqref{ip_linsys}. They are of complexity $\mathcal{O}(\log^3(n))$ and $\mathcal{O}(n\log(n))$ respectively, and both are affordable to compute. We implement Algorithm 16.4 in \cite{Nocedal_book} to solve \eqref{eq:quadprog}, and because the number of iterations with increasing variable dimensions is usually stable for interior point algorithms, our SQP algorithm has an overall complexity of $O(n\log^2(n))$.

\section{Error Analysis - Convergence in Integrality Gap}\label{algo_sec:error}\label{sec:error}
In this section, we analyze the accuracy of our algorithm from the interpolation-based approximations. Specifically, let $w^N$ be the solution from SQP (Algorithm \ref{sqp_algo:sqp}), $w^{N, int}$ be the integer solution constructed from SUR (see Appendix \ref{app:sur}), and $w^n$ be the true solution to the relaxed program \eqref{algo_eq:doe2}, we show that with the full $F$ matrix, as $n\to\infty$
\begin{equation}
\Big|\phi\big(\Gamma_{post}(w^n)\big) - \phi\big(\Gamma_{post}(w^{N, int})\big)\Big| \to 0.
\end{equation}
Note $N=\mO(\log(n))$ also increases to infinity. It tells us if we solve the optimization with the low-rank approximation matrix $F_s$, the objective value of the SUR integer solution converges to the true minimum. We address this problem in the following two subsections.
 
\subsection{Connection Between Two Optimization Problems}
Algorithm \ref{sqp_algo:sqp} returns a solution to the following convex program:
\begin{equation}\label{eq:doe_lowrank}
\begin{array}{rl}
\mbox{min}\quad &\phi_s(\Gamma_{post}(w))\\
\mbox{s.t.}\quad  &0\le w_i\le 1,\ \sum_{i=1}^n w_i\le n_0.
\end{array}
\end{equation}
where $\phi_s(\Gamma_{post}(w)) = \phi\big( (F_s^TWF_s + \mathrm{I}_n)^{-1}\big)$. The program differs from the original relaxation \eqref{algo_eq:doe2} only in $F$, and for simplicity, we use the abbreviation $\phi_s(w)$ for $\phi_s(\Gamma_{post}(w))$, and $\phi(w)$ for $\phi(\Gamma_{post}(w))$.

\begin{claim}\label{error_claim:phi1}
Let $w^N$, $w^n$ be the solution to \eqref{eq:doe_lowrank} and \eqref{algo_eq:doe2} respectively. If $|\phi(w)-\phi_s(w)| < \epsilon$ for any $w\in\mathbb{R}^n$, then
\begin{equation}
|\phi(w^N) - \phi(w^n)| < 2\epsilon.
\end{equation}
\end{claim}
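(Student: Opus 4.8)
The plan is to exploit the fact that both \eqref{eq:doe_lowrank} and \eqref{algo_eq:doe2} are minimized over the \emph{same} feasible polytope $\mathcal{C} = \{w : 0\le w_i\le 1,\ \sum_{i=1}^n w_i\le n_0\}$, so that only the objective changes. This turns the statement into a routine ``optimality sandwich'' estimate, and the uniform bound $|\phi(w)-\phi_s(w)|<\epsilon$ (which in fact is only needed for $w\in\mathcal{C}$) is precisely what is required.

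First I would record the two optimality inequalities: since $w^n$ minimizes $\phi$ over $\mathcal{C}$ and $w^N\in\mathcal{C}$, we have $\phi(w^n)\le \phi(w^N)$; and since $w^N$ minimizes $\phi_s$ over $\mathcal{C}$ and $w^n\in\mathcal{C}$, we have $\phi_s(w^N)\le \phi_s(w^n)$. The first of these immediately gives $\phi(w^N)-\phi(w^n)\ge 0$, so it remains only to bound this quantity from above.

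Next I would insert $\phi_s$ via the telescoping decomposition
\[
\phi(w^N)-\phi(w^n) = \bigl(\phi(w^N)-\phi_s(w^N)\bigr) + \bigl(\phi_s(w^N)-\phi_s(w^n)\bigr) + \bigl(\phi_s(w^n)-\phi(w^n)\bigr).
\]
The middle bracket is $\le 0$ by the second optimality inequality above, while the outer two brackets are each bounded in absolute value by $\epsilon$ using the hypothesis applied at $w^N\in\mathcal{C}$ and at $w^n\in\mathcal{C}$. Hence $\phi(w^N)-\phi(w^n)<\epsilon+0+\epsilon=2\epsilon$. Combining with the nonnegativity already established yields $0\le \phi(w^N)-\phi(w^n)<2\epsilon$, i.e. $|\phi(w^N)-\phi(w^n)|<2\epsilon$.

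There is essentially no hard step here: the only thing to be careful about is that the two programs share the identical constraint set, so that each minimizer is admissible in the other problem — this is what makes both optimality comparisons legitimate. (If one instead worked with the strict-equality form \eqref{intro_opt:doerelax}, the Claim following \eqref{algo_eq:doe2} guarantees the feasible sets can be taken to coincide, so nothing changes.) I would also remark that the constant $2$ is the expected one and cannot in general be improved without extra structure, since the perturbation can cost $\epsilon$ at each of the two minimizers.
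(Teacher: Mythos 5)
Your proof is correct and follows essentially the same route as the paper's: both rest on the two optimality inequalities $\phi(w^n)\le\phi(w^N)$ and $\phi_s(w^N)\le\phi_s(w^n)$ together with the uniform bound applied at the two minimizers; your telescoping decomposition is just the expanded form of the paper's chained inequality $\phi(w^N)\le\phi_s(w^N)+\epsilon\le\phi_s(w^n)+\epsilon<\phi(w^n)+2\epsilon$. Your added remark that both programs share the same feasible set (which legitimizes the cross-comparisons) is a point the paper leaves implicit but is worth stating.
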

In other words, if $\phi_s$ is close to $\phi$ for any $w$, then the objective value of $w^N$ is close to the true minimum $ \phi(w^n)$.
\begin{proof}
Because $w^N$ and $w^n$ minimizes $\phi_s(w)$ and $\phi(w)$ respectively, we have
\begin{equation}\label{ineq: obj_1}
\phi_s(w^N)\le\phi_s(w^n),\quad \phi(w^n)\le\phi(w^N).
\end{equation}
From the assumption $|\phi(w)-\phi_s(w)| < \epsilon$, we know 
\begin{equation}
|\phi(w^N) - \phi_s(w^N)| < \epsilon, \quad |\phi(w^n) - \phi_s(w^n)| < \epsilon,
\end{equation}
and together with \eqref{ineq: obj_1}, we get
\begin{equation}\label{ineq: obj_2}
\phi(w^N)\le\phi_s(w^N)+\epsilon\le\phi_s(w^n)+\epsilon<\phi(w^n) + 2\epsilon.
\end{equation}
The result follows directly from \eqref{ineq: obj_1} and \eqref{ineq: obj_2}.
\end{proof}

Next we show $|\phi(w)-\phi_s(w)|$ is small for any $w\in\mathbb{R}^n$. Because $\frac{1}{\Delta y}F_s(i,j)$ is an approximation of $\frac{1}{\Delta y}F(i,j)$ which equals $f(x_i,y_j)$, their difference is determined by the error in Chebyshev approximtaion (see \cite{greenbaum2012numerical}), and we quantify $|\phi(w)-\phi_s(w)|$ now. 

We use the notation $\|X\|=\|X\|_F$ (Frobenius norm) for any matrix $X$ from now on.

\begin{claim}\label{algo_claim:obj1}
If $\frac{1}{\Delta y}|F(i,j) - F_s(i,j)| < \epsilon$, then for any $w\in\mathbb{R}^n$,
\begin{equation}
|\phi(w) - \phi_s(w)| \le C\cdot N\cdot\epsilon,
\end{equation}
where $C$ is a positive constant independent of $n$ and $N$.
\end{claim}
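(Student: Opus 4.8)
The plan is to control $|\phi(w)-\phi_s(w)| = |\operatorname{tr}((F^TWF+\mathrm{I}_n)^{-1}) - \operatorname{tr}((F_s^TWF_s+\mathrm{I}_n)^{-1})|$ by a standard resolvent-difference argument, reducing everything to $\|F-F_s\|$ (Frobenius norm) and then bounding that by the entrywise hypothesis. First I would write $A = F^TWF+\mathrm{I}_n$ and $A_s = F_s^TWF_s+\mathrm{I}_n$, both symmetric positive definite with $A, A_s \succeq \mathrm{I}_n$, so $\|A^{-1}\|_{\mathrm{op}}\le 1$ and $\|A_s^{-1}\|_{\mathrm{op}}\le 1$. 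Using the resolvent identity $A^{-1}-A_s^{-1} = A^{-1}(A_s-A)A_s^{-1}$ and $|\operatorname{tr}(XY)|\le\|X\|_F\|Y\|_F$, I get
\begin{equation}
|\phi(w)-\phi_s(w)| = |\operatorname{tr}(A^{-1}-A_s^{-1})| \le \|A^{-1}\|_{\mathrm{op}}\,\|A_s^{-1}\|_{\mathrm{op}}\cdot\sqrt{n}\,\|A_s - A\|_{\mathrm{op}} \le \sqrt{n}\,\|A_s-A\|_{\mathrm{op}},
\end{equation}
or, keeping Frobenius norms throughout, $|\phi(w)-\phi_s(w)|\le\|A_s-A\|_F$ directly since $\operatorname{tr}(A^{-1}XA_s^{-1})$ with $X=A_s-A$ is bounded by $\|A^{-1}\|_{\mathrm{op}}\|A_s^{-1}\|_{\mathrm{op}}\|X\|_F\sqrt{\operatorname{rank}}$; the cleanest route is $|\operatorname{tr}(A^{-1}(A_s-A)A_s^{-1})|\le \|A^{-1}(A_s-A)A_s^{-1}\|_*\le\|A^{-1}\|_{\mathrm{op}}\|A_s^{-1}\|_{\mathrm{op}}\|A_s-A\|_* $ — but I will instead just use $\|\cdot\|_F$ and the submultiplicativity $\|A^{-1}XA_s^{-1}\|_F\le\|A^{-1}\|_{\mathrm{op}}\|X\|_F\|A_s^{-1}\|_{\mathrm{op}}\le\|X\|_F$, so that $|\phi(w)-\phi_s(w)|\le\|F^TWF - F_s^TWF_s\|_F$.

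Next I would expand $F^TWF - F_s^TWF_s = F^TW(F-F_s) + (F-F_s)^TWF_s$ and bound it, using $\|W\|_{\mathrm{op}}\le 1$ (since $0\le w_i\le 1$), by $\|F^TWF - F_s^TWF_s\|_F \le \|F\|_{\mathrm{op}}\|F-F_s\|_F + \|F-F_s\|_F\|F_s\|_{\mathrm{op}}$. Here is where I need size bounds on $F$ and $F_s$ that are uniform in $n$ up to the factor $N$. Since $F(i,j) = f(x_i,y_j)\Delta y$ with $f$ bounded on the compact domain, say $|f|\le M_f$, we have $\|F\|_F \le M_f \Delta y \cdot n = M_f\,\mu(\indom)$, a constant; similarly $\|F_s\|_F\le\|F\|_F + \|F-F_s\|_F\le M_f\mu(\indom) + \sqrt{n\cdot n}\,\Delta y\,\epsilon = M_f\mu(\indom) + \mu(\indom)\epsilon$, again $\mO(1)$ for bounded $\epsilon$. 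And crucially $\|F-F_s\|_F \le \Delta y\,\epsilon\sqrt{n^2} = \mu(\indom)\epsilon$ by the entrywise hypothesis. Combining, $|\phi(w)-\phi_s(w)|\le C'\epsilon$ with $C'$ independent of both $n$ and $N$ — which is actually slightly stronger than the claimed $C\cdot N\cdot\epsilon$, so the stated bound holds a fortiori.

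The one place I expect the argument to be more delicate than the sketch above is the appearance of the factor $N$: the naive Frobenius-norm chain I outlined gives no $N$, which suggests the authors are being deliberately lossy, or that the intended proof instead factors through the low-rank representations $F_s = C_x^T\tilde F C_y$ and bounds $\|F - F_s\|_F$ not entrywise but via $\|\tilde F - (\text{true submatrix})\|$ together with the Lebesgue constant $\|C_x\|$, $\|C_y\|$ of Chebyshev interpolation, which grows like $\log N$ and could contribute polynomial-in-$N$ factors when one is careless about how the interpolation error on $f$ propagates through both index directions. I would therefore present the clean $\|\cdot\|_F$ argument as the main line, note that it already yields the claim (even without the $N$), and remark that if one instead bounds $\|F-F_s\|$ through the interpolation operators the extra factor is at worst $\mO(N)$, consistent with the statement. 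The genuinely load-bearing facts are just three: $A,A_s\succeq \mathrm{I}_n$ (so the resolvents are contractions), $\|W\|_{\mathrm{op}}\le 1$, and the uniform bound $\|F\|_F=\mO(1)$ coming from $\Delta y = \mu(\indom)/n$ — everything else is the resolvent identity and Cauchy--Schwarz for traces.
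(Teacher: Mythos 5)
Your main line has one genuinely broken step: from $\|A^{-1}XA_s^{-1}\|_F\le\|X\|_F$ you conclude $|\operatorname{tr}(A^{-1}XA_s^{-1})|\le\|X\|_F$, which requires the inequality $|\operatorname{tr}(M)|\le\|M\|_F$. That inequality is false in general: $|\operatorname{tr}(M)|\le\sqrt{\operatorname{rank}(M)}\,\|M\|_F$ is the best one can say (take $M=\mathrm{I}_n$, where $\operatorname{tr}(M)=n$ but $\|M\|_F=\sqrt{n}$), and here $X=A_s-A=F_s^TWF_s-F^TWF$ can have rank $n$, not rank $N$. So the Frobenius chain only delivers $|\phi(w)-\phi_s(w)|\le\sqrt{n}\,c\,\epsilon$, exactly like your first display; since $N=\mO(\log n)$, a $\sqrt{n}$ factor is \emph{not} absorbed by the claimed $C\cdot N\cdot\epsilon$, so the argument as committed to does not prove the claim. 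The good news is that the variant you mention in passing and then discard --- trace bounded by the nuclear norm, $|\operatorname{tr}(M)|\le\|M\|_*$, together with $\|A^{-1}YA_s^{-1}\|_*\le\|A^{-1}\|_{\mathrm{op}}\|Y\|_*\|A_s^{-1}\|_{\mathrm{op}}\le\|Y\|_*$ --- does close the gap, provided you finish it with the Schatten--H\"older inequality $\|XY\|_{*}\le\|X\|_F\|Y\|_F$ applied to $A_s-A=F_s^TW(F_s-F)+(F_s-F)^TWF$; this gives $\|A_s-A\|_*\le(\|F\|_F+\|F_s\|_F)\|F-F_s\|_F\le c\,\epsilon$ and hence $|\phi(w)-\phi_s(w)|\le c\,\epsilon$ with no $N$ at all. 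You should make that the main line, not the aside.

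Once repaired, your route is genuinely different from the paper's and in fact sharper. The paper never uses a resolvent identity: it passes to the ordered eigenvalues $\lambda_i^n$ of $F^TWF$ and $\lambda_i^{n,s}$ of $F_s^TWF_s$, uses the rank-$N$ structure of $F_s$ to note $\lambda_i^{n,s}=0$ for $i>N$, splits the sum $\sum_i\bigl(\tfrac{1}{1+\lambda_i^n}-\tfrac{1}{1+\lambda_i^{n,s}}\bigr)$ at index $N$, bounds the head by $N$ copies of the per-eigenvalue perturbation $|\lambda_i^n-\lambda_i^{n,s}|\le\|F^TWF-F_s^TWF_s\|_F\le c\epsilon$, and controls the tail $\sum_{i>N}\lambda_i^n$ by an entrywise comparison of $\operatorname{tr}(F^TWF)$ with $\operatorname{tr}(F_s^TWF_s)$; the factor $N$ in the statement is an artifact of counting the head term by term. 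Your nuclear-norm argument replaces that entire splitting by the single inequality $\sum_i|\lambda_i^n-\lambda_i^{n,s}|\le\|F^TWF-F_s^TWF_s\|_*$ and shows the factor $N$ is unnecessary. Your speculation that the $N$ comes from Lebesgue constants is off the mark --- the hypothesis here is already the entrywise bound on $F-F_s$, and the Lebesgue-constant bookkeeping happens in the next subsection of the paper, not in this claim.
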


Note $F_s$ is defined in \eqref{interp_eq:F_s} with $N$ interpolation points, and $\epsilon$ represents the interpolation error which we will discuss in the next subsection.
\begin{proof}
Because $|F(i,j) - F_s(i,j)|<\epsilon\Delta_y$ for $i,j = 1,2,..,n$, we have 
\begin{equation}\label{algo_eq:FFS}
\|F-F_s\| < \sqrt{\sum_{i=1}^n\sum_{j=1}^n \epsilon^2\Delta^2y} = n\Delta y\cdot \epsilon = \mu(\indom)\cdot\epsilon.
\end{equation}
Similarly because $|\frac{1}{\Delta y}F(i, j)|=|f(x_i,y_j)|\le\max |f(x, y)|$,  
\begin{equation}\label{algo_eq:boundF}
\|F\| = \sqrt{\sum_{i=1}^n\sum_{j=1}^n F(i, j)^2(\Delta y)^2} \le n\Delta y\cdot\max |f(x,y)| = \mu(\indom)\cdot\max |f(x,y)|.
\end{equation}
Moreover, we can show $\|F_s\|$ is also bounded
\begin{equation}
\|F_s\| = \|F_s+F-F\|\le \|F\|+\|F-F_s\| \le \mu(\indom)\cdot\max |f(x,y)| + \mu(\indom)\cdot\epsilon.
\end{equation}
When $\epsilon$ is small (e.g. $\epsilon < \max |f(x,y)|$), we get
\begin{equation}\label{algo_eq:boundFS}
\|F_s\| \le 2\mu(\indom)\cdot\max |f(x,y)|.
\end{equation}
Because $W$ is a diagonal matrix with each component between 0 and 1, the matrix product $WF$ results in multiplying the $i^{th}$ row of $F$ by $w_i$ and thus $\|WF\|\le\|F\|$. For similar reasons, we have $\|FW\|\le\|F\|$ and get
\begingroup
\addtolength{\jot}{0.5em}
\begin{align}
\big\| F^TWF - F_s^TWF_s \big\| &\le \big\| F^TW(F-F_s)\big\| + \big\|(F-F_s)^TWF_s \big\| \\
&\le \| F^TW\|\cdot \|(F-F_s)\| + \|F-F_s\|\cdot \|WF_s\| \\
&\le \| F\|\cdot \|(F-F_s)\| + \|F-F_s\|\cdot \|F_s\| \\
&< c\epsilon
\end{align}
\endgroup
where the postive constant $c=3\cdot\mu^2(\indom)\cdot\max |f(x,y)|$ from \eqref{algo_eq:FFS}, \eqref{algo_eq:boundF} and \eqref{algo_eq:boundFS}. Let
\begin{equation}
\lambda^n_1\ge\lambda^n_2\ge\cdots\ge\lambda^n_n, \quad\lambda^{n,s}_1\ge\lambda^{n,s}_2\ge\cdots\ge\lambda^{n,s}_n
\end{equation}
be the eigenvalues of $F^TWF$ and $F_s^TWF_s$ respectively. In \cite{sur_oed}, it has been proved
\begin{equation}\label{ineq: FFS_entry}
| \lambda^n_i - \lambda^{n,s}_i | < \|F^TWF - F_s^TWF_s \| < c\epsilon.
\end{equation}
Because the rank of $F_s$ is at most $N$, we have $\lambda^{n,s}_{N+1}=..=\lambda^{n,s}_n=0$. In the trace case, 
\begin{align*}
|\phi(w) - \phi_s(w)| &= \big|\sum_{i=1}^n\frac{1}{1 + \lambda^n_i} - \sum_{i=1}^n\frac{1}{1 + \lambda^{n,s}_i} \big| \\
& \le \big| \sum_{i=1}^N \big(\frac{1}{1+\lambda^n_i} - \frac{1}{1+\lambda^{n,s}_i} \big) \big| + \big| \sum_{i=N+1}^n \big(\frac{1}{1+\lambda^n_i} - \frac{1}{1+\lambda^{n,s}_i} \big) \big| \\
& = \sum_{i=1}^N\frac{|\lambda^n_i-\lambda^{n,s}_i|}{(1+\lambda^n_i)(1+\lambda^{n,s}_i)} + \sum_{i=N+1}^N\frac{\lambda^n_i}{1+\lambda^n_i} \\
& \le \sum_{i=1}^N|\lambda^n_i-\lambda^{n,s}_i| + \sum_{i=N+1}^n\lambda^n_i.
\end{align*}

We control the two terms separately. The first term $\sum_{i=1}^N|\lambda^n_i-\lambda^{n,s}_i| $ is bounded by $c\cdot N\epsilon$ from \eqref{ineq: FFS_entry}, and for the other, we first notice that 
\begin{align*}
 \big|tr(F^TWF) - tr(F_s^TWF_s)\big| &= \big|\Delta_y^2\sum_{i,j} w_if^2(x_i,y_j) - \Delta_y^2\sum_{i,j}w_if_s^2(x_i,y_j)\big| \\
& = \big| \Delta_y^2\sum_{i,j} w_i\big(f(x_i,y_j)+f_s(x_i,y_j)\big)\big(f(x_i,y_j)-f_s(x_i,y_j)\big) \big| \\
& \le \Delta_y^2\sum_{i,j} (2c_f\cdot\epsilon) \\
& = 2(n\Delta y)^2c_f\cdot\epsilon=: \tilde{c}\epsilon
\end{align*}
where $c_f$ is the uniform bound for both $|f(x_i,y_j)|$ and $|f_s(x_i,y_j)|$. The constant $\tilde{c}$ depends on $c_F$ and $\mu(\indom)$. The last by two step is due to the claim assumption and $f(x_i,y_j)-f_s(x_i,y_j) = \frac{1}{\Delta y}(F(i,j) - F_s(i,j))$. Because the trace function can be expressed as the sum of eigenvalues, we have
\begin{align*}
\big|tr(F^TWF) - tr(F_s^TWF_s)\big|  &= \big| \sum_{i=1}^n\lambda^n_i - \sum_{i=1}^N\lambda^{n,s}_i| \\
&= \big| \sum_{i=1}^N (\lambda^n_i-\lambda^{n,s}_i) + \sum_{i>N}\lambda^{n}_i \big| \\
&\ge -\sum_{i=1}^N |\lambda^n_i-\lambda^{n,s}_i| + \sum_{i>N}\lambda^n_i
\end{align*}
which implies
\begin{equation}
\sum_{i>N}\lambda^n_s \le \big|tr(F^TWF) - tr(F_s^TWF_s)\big| + \sum_{i=1}^N |\lambda^n_i-\lambda^{n,s}_i|  \le \tilde{c}\epsilon + c\cdot N\epsilon.
\end{equation}
Therefore,
\begin{equation}\label{algo_eq:bound_phi}
|\phi(w) - \phi_s(w)| \le (\tilde{c} + 2cN) \epsilon,
\end{equation}
where $\tilde{c}$ and $c$ are constants free of $n$ and $N$. When $N>\tilde{c}$, we get for any $w\in\mathbb{R}^n$,
\begin{equation}
|\phi(w) - \phi_s(w)| \le (2c+1)N\cdot\epsilon =: C\cdot N\cdot\epsilon.
\end{equation}
which completes the proof.
\end{proof}

\subsection{Chebyshev Interpolation Error}
Claim \ref{algo_claim:obj1} suggests that to get accurate approximation, we should make $N\epsilon$ small where $\epsilon$ is the error in Chebyshev polynomial approximation, which depends on the number of interpolation points $N$ and the smoothness of $f(x,y)$. Classical theory on Chebyshev interpolation error is well developed (e.g. see \cite{greenbaum2012numerical}) and gives the following result: Let $f$ be a continuous function on $[-1,1]$, $h_n$ be its degree $n$ polynomial interpolant at the Chebyshev nodes, $\varepsilon = \|f-h_n\|_{\infty}$, 
\begin{itemize}
\item if $f$ has a $k^{th}$ derivative of bounded variation, then $\varepsilon = \mathcal{O}(N^{-k})$;
\item if $f$ is analytical in a neighborhood of $[-1,1]$, then $\varepsilon = \mathcal{O}(\rho^N)$ for some $0<\rho<1$.
\end{itemize}

We are now ready to bound $\frac{1}{\Delta y}|F(i,j) - F_s(i,j)|$ to satisfy the condition in Claim \ref{algo_claim:obj1}. Note that $\frac{1}{\Delta y}F(i,j) = f(x_i, y_j)$ and 
\begin{equation}
\frac{1}{\Delta y}F_s(i,j) = \sum_{p=1}^N\sum_{q=1}^N l_p(x_i)l_q(y_j)f(\tilde{x}_p, \tilde{y}_q)
\end{equation}
where
\begin{equation}
l_p(x) = \prod_{\substack{k = 1\\ k\neq p}}^N\frac{x-\tilde{x}_k}{\tilde{x}_p-\tilde{x}_k}, \quad l_q(y) = \prod_{\substack{k = 1\\ k\neq q}}^N\frac{y-\tilde{y}_k}{\tilde{y}_q-\tilde{y}_k}
\end{equation}
and $\{\tilde{x}_p\}_{i=1}^N$ and $\{\tilde{y}_q\}_{i=1}^N$ are interpolation points in $\outdom$ and $\indom$ respectively. In one dimension, if for $\forall x\in\outdom, \forall y\in\indom$,
\begin{equation}
\Big| f(x,y) - \sum_{p=1}^Nl_p(x)f(\tilde{x}_p, y) \Big| \le \epsilon_0, \quad \Big| f(x,y) - \sum_{q=1}^Nl_q(y)f(x, \tilde{y}_q) \Big| \le \epsilon_0,
\end{equation}
then 
\begin{align}
& \frac{1}{\Delta y}|F(i,j) - F_s(i,j)| \\
& = \Big| f(x_i, y_j) - \sum_{p=1}^N\sum_{q=1}^N l_p(x_i)l_q(y_j)f(\tilde{x}_p, \tilde{y}_q) \Big| \\
& = \Big| f(x_i, y_j)  - \sum_{p=1}^Nl_p(x_i)f(\tilde{x}_p, y_j) +  \sum_{p=1}^Nl_p(x_i) \Big( f(\tilde{x}_p, y_j) - \sum_{q=1}^N l_q(y_j)f(\tilde{x}_p, \tilde{y}_q) \Big) \Big| \\
&\le \Big| f(x_i, y_j)  - \sum_{p=1}^Nl_p(x_i)f(\tilde{x}_p, y_j)  \Big| + \sum_{p=1}^N|l_p(x_i)|\Big| f(\tilde{x}_p, y_j) - \sum_{q=1}^N l_q(y_j)f(\tilde{x}_p, \tilde{y}_q) \Big| \\
&\le \epsilon_0 + \epsilon_0\sum_{p=1}^N |l_p(x_i)|\\
&\le \epsilon_0 + \Lambda_N*\epsilon_0 \label{algo_eq:bound_Fij}
\end{align}
where $\Lambda_N$ is called the Lebesgue constant and it is the opeator norm of Lagragian interpolation polynomial projection at Chebyshev nodes. It is known (see \cite{LebCons}) that
\begin{equation}
\frac{2}{\pi}\log(N) + a < \Lambda_N < \frac{2}{\pi}\log(N) + 1, \quad a = 0.9625....
\end{equation}
\begin{itemize}
\item If $f(x,y)$ is $k^{th}$ order continuously differentiable, in the one-dimensional case, we have
\begin{align*}
\epsilon_0 = \mathcal{O}(N^{-k}) \quad &\xRightarrow[]{\eqref{algo_eq:bound_Fij}}\quad \frac{1}{\Delta y}|F(i,j) - F_s(i,j)| = \mathcal{O}(N^{-k}\log(N)) \\ &\xRightarrow[]{\eqref{algo_eq:bound_phi}}\quad |\phi(w) - \phi_s(w)| = \mathcal{O}(N^{1-k}\log(N)).
\end{align*}
We conclude that when $f(x,y)$ is at least 2nd order differentiable, $ |\phi(w) - \phi_s(w)| $ will diminish as $n\to\infty$. The decay gets slower in multiple dimensions intuitively because if $\indom, \outdom\subset R^d$, there are only $N^{1/d}$ interpolation points on each dimension. For the purpose of easy presentation, let's still assume there are $N$ interpolation points on each dimension. To derive an error bound, we define the Chebyshev interpolation operator $\mathcal{I}_N$ that maps a function $h\in\mathbb{C}([-1,1])$ to a degree-$N$ polynomial:
\begin{equation}
\mathcal{I}_N h(x) = \sum_{p=1}^Nl_p(x)h(\tilde{x}_p).
\end{equation}
Since $\Lambda_N$ is the operator norm, we have $\|\mathcal{I}_N h\|_\infty \le \Lambda_N\|h\|_\infty$. For $x,y\in\mathbb{R}^d$, we now define the double $d$-th order tensor product interpolation operator:
\begin{equation}
\mathcal{I}^{d,d}_Nf(x, y) = \mathcal{I}_{N,x}^1\times\cdots\times\mathcal{I}_{N,x}^d\times\mathcal{I}_{N,y}^1\times\cdots\times\mathcal{I}_{N,y}^df(x,y),
\end{equation}
where $\mathcal{I}_{N,x}^i$ denotes the single interpolation operator along the $i$-th dimension in the output domain $\outdom$, and similarly $\mathcal{I}_{N,y}^j$ is the single interpolation operator along the $j$-th dimension in $\indom$.
\begingroup
\addtolength{\jot}{0.5em}
\begin{align}
|f(x,y) - \mathcal{I}^{d,d}_Nf(x,y)| =&\ |f(x,y) -\mathcal{I}_{N,x}^1f(x,y) + \mathcal{I}_{N,x}^1f(x,y) - \mathcal{I}_Nf(x,y)|\\
\le&\ |f(x,y) -\mathcal{I}_{N,x}^1f(x,y)| +  |\mathcal{I}_{N,x}^1f(x,y) - \mathcal{I}_Nf(x,y)|\\
\le&\ |f(x,y) -\mathcal{I}_{N,x}^1f(x,y)| +  |\mathcal{I}_{N,x}^1f(x,y) - \mathcal{I}_{N,x}^1\mathcal{I}_{N,x}^2f(x,y)| \\
&+ |\mathcal{I}_{N,x}^1\mathcal{I}_{N,x}^2f(x,y) - \mathcal{I}_Nf(x,y)| \\
\le&\ |f(x,y) -\mathcal{I}_{N,x}^1f(x,y)| +  \Lambda_N|f(x,y) - \mathcal{I}_{N,x}^2f(x,y)| \\
&+ \cdots + \Lambda_{N}^{2d-1}|f(x,y) - \mathcal{I}_{N,y}^df(x,y)| \\
\le&\ \epsilon_0(1 + \Lambda_N + \Lambda_N^2 + \cdots + \Lambda_N^{2d-1})\\
=&\ \epsilon_0\frac{\Lambda_N^{2d} - 1}{\Lambda_N - 1} 
\le \epsilon_0\cdot\Lambda_N^{2d} \label{algo_eq:bound_tensor}
\end{align}
\endgroup
for any $\Lambda_N > 2$. We implicitly assume here $\epsilon_0$ is the uniform bound of the interpolation error on any single dimension in $\indom$ and $\outdom$. Now we go back to the case where there are $N$ interpolation points in total, so there are $N^{1/d}$ interpolations on each side and if $f(x,y)$ is $k$-th order continuously differentiable,
\begin{align*}
\epsilon_0 = \mathcal{O}(N^{-k/d}) \quad &\xRightarrow[]{\eqref{algo_eq:bound_tensor}}\quad \frac{1}{\Delta y}|F(i,j) - F_s(i,j)| = \mathcal{O}(N^{-k/d}\log^{2d}(N)) \\ &\xRightarrow[]{\eqref{algo_eq:bound_phi}}\quad |\phi(w) - \phi_s(w)| = \mathcal{O}(N^{1-k/d}\log^{2d}(N)).
\end{align*}
In order to guarantee the convergence of $|\phi(w) - \phi_s(w)|$ to zero, $f(x,y)$ should be at least $(d+1)$-th continuously differentiable.
\item If $f(x,y)$ is analytical, then $\epsilon_0 = \mathcal{O}(\rho^{N^{1/d}})$ for some $0<\rho<1$ , then
\begin{equation}\label{error_eq:error_phi_inf}
|\phi(w) - \phi_s(w)| = \mathcal{O}\big(N\log^{2d}(N)\rho^{N^{1/d}}\big).
\end{equation}
which converges to zero for any dimension $d$.
\end{itemize}
\begin{theorem}
If $f(x,y)$ is an analytical function on $\outdom\times\indom$, then as $n\to\infty$,
\begin{equation}\label{error_eq:gap_full}
\Big|\phi\big(\Gamma_{post}(w^n)\big) - \phi\big(\Gamma_{post}(w^{N, int})\big)\Big| \to 0.
\end{equation}
\end{theorem}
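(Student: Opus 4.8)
The plan is to chain the two independent error mechanisms that \S\ref{sec:error} has already isolated: the Chebyshev low-rank error controlling $\phi$ against $\phi_s$, and the sum-up-rounding error controlling $w^N$ against $w^{N,int}$, and to show each contribution vanishes. Set
\[
\delta_N \;:=\; \sup\big\{\,\big|\phi(w)-\phi_s(w)\big| \;:\; 0\le w_i\le 1,\ \textstyle\sum_i w_i\le n_0\,\big\}.
\]
When $f$ is analytic on $\outdom\times\indom$ the one-dimensional interpolation error is $\epsilon_0=\mO(\rho^{N^{1/d}})$ for some $0<\rho<1$, so the tensor-product estimate \eqref{algo_eq:bound_tensor} gives $\tfrac{1}{\Delta y}|F(i,j)-F_s(i,j)|=\mO(\rho^{N^{1/d}}\log^{2d}N)$, and feeding this into Claim \ref{algo_claim:obj1} reproduces \eqref{error_eq:error_phi_inf}, namely $\delta_N=\mO\!\big(N\log^{2d}(N)\,\rho^{N^{1/d}}\big)$. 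Since $N=\mO(\log n)\to\infty$ as $n\to\infty$ and the stretched-exponential factor $\rho^{N^{1/d}}$ dominates every polynomial and polylogarithmic factor, $\delta_N\to 0$.

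With $\delta_N\to 0$ in hand the proof closes by the triangle inequality. Passing through $\phi(w^N)$,
\[
\big|\phi(w^n)-\phi(w^{N,int})\big| \;\le\; \big|\phi(w^n)-\phi(w^N)\big| \;+\; \big|\phi(w^N)-\phi(w^{N,int})\big|,
\]
and the first term is at most $2\delta_N$ by Claim \ref{error_claim:phi1} (applied with $\epsilon=\delta_N$), since $w^N$ minimizes $\phi_s$ and $w^n$ minimizes $\phi$ over the common feasible set of \eqref{eq:doe_lowrank} and \eqref{algo_eq:doe2}. For the second term I would insert $\phi_s$ at both endpoints,
\[
\big|\phi(w^N)-\phi(w^{N,int})\big| \;\le\; \big|\phi(w^N)-\phi_s(w^N)\big| + \big|\phi_s(w^N)-\phi_s(w^{N,int})\big| + \big|\phi_s(w^{N,int})-\phi(w^{N,int})\big|,
\]
where the outer two terms are each $\le\delta_N$ by Claim \ref{algo_claim:obj1}, and the middle term is exactly the integrality gap of the surrogate design problem \eqref{eq:doe_lowrank} that the SQP algorithm actually solves — so the sum-up-rounding convergence result of \citep{sur_oed} (Appendix \ref{app:sur}) yields $|\phi_s(w^N)-\phi_s(w^{N,int})|\to 0$. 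Collecting the bounds, $\big|\phi(w^n)-\phi(w^{N,int})\big|\le 4\delta_N+o(1)\to 0$.

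The step requiring genuine care — and the main obstacle — is justifying that the sum-up-rounding theorem applies to the surrogate problem \eqref{eq:doe_lowrank} with forward map $F_s$, not merely to the original relaxation. That theorem asks for a continuously indexed discretized map whose entries are bounded uniformly in the mesh size, together with scalar covariances and $m/n$, $n_0/n$ fixed. I would verify these for $F_s$: its rescaled entries $\tfrac{1}{\Delta y}F_s(i,j)=\sum_{p,q}l_p(x_i)l_q(y_j)f(\tilde x_p,\tilde y_q)$ are, for each $N$, polynomials in the grid coordinates, hence smooth, and by \eqref{algo_eq:boundFS} bounded by $2\mu(\indom)\max|f|$ independently of $n$; because $f$ is analytic the interpolant $\tfrac1{\Delta y}F_s$ converges to $f$ together with its low-order derivatives, so whatever smoothness constants enter the SUR estimate stay bounded uniformly in $N$. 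The covariance hypothesis is inherited verbatim, and $w^{N,int}$ satisfies the same cardinality feasibility as in \citep{sur_oed}. A last, routine bookkeeping point is that the constants $c,\tilde c,C$ of Claim \ref{algo_claim:obj1} and the constant hidden in \eqref{error_eq:error_phi_inf} depend only on $\mu(\indom)$, $\max|f|$ and $d$, never on $n$ or $N$, so the limit $n\to\infty$ (equivalently $N\to\infty$) is harmless.
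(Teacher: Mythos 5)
Your proposal is correct and follows the same skeleton as the paper's proof: pass through $w^N$ by the triangle inequality, control $|\phi(w^n)-\phi(w^N)|$ via Claim \ref{error_claim:phi1} together with the analytic-case bound \eqref{error_eq:error_phi_inf}, and invoke the sum-up-rounding convergence of \citep{sur_oed} for the remaining term. The one genuine difference is where you place the SUR step. The paper applies \citep[Theorem~3]{sur_oed} directly to $\big|\phi(w^N)-\phi(w^{N,int})\big|$ with the \emph{full} matrix $F$; this keeps the hypotheses of that theorem trivially satisfied (the kernel is the fixed analytic $f$, independent of $n$), but it implicitly uses the fact that the SUR estimate holds for \emph{any} feasible relaxed vector and its rounding, not only for the minimizer of $\phi$ --- which is fine, since minimality plays no role in that argument, but is worth stating. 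You instead insert $\phi_s$ at both endpoints and apply SUR to the surrogate problem \eqref{eq:doe_lowrank}, where $w^N$ genuinely is the minimizer; the price is two additional $\delta_N$ terms (harmless) plus the obligation you correctly identify as the delicate point: the SUR theorem's continuity/smoothness hypotheses must now hold for the $n$-dependent interpolated kernel $f_s$, uniformly in $N$. Your sketch of why this works (uniform boundedness from \eqref{algo_eq:boundFS} and convergence of the Chebyshev interpolant together with its low-order derivatives for analytic $f$) is plausible but is the one place where your route demands more than the paper's; if you want the shorter path, apply the SUR theorem with the full $F$ as the paper does and note explicitly that its proof never uses optimality of the relaxed point. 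Either way the conclusion $\big|\phi(w^n)-\phi(w^{N,int})\big|\to 0$ follows, and your bookkeeping of the constants is consistent with Claim \ref{algo_claim:obj1}.
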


\begin{proof}
From \eqref{error_eq:error_phi_inf} we have $|\phi(w) - \phi_s(w)|\to0$ for any $w\in\R^n$, and then by Claim \ref{error_claim:phi1},
\begin{equation}\label{error_eq:relax_erro}
\Big|\phi\big(\Gamma_{post}(w^n)\big) - \phi\big(\Gamma_{post}(w^{N})\big)\Big| \to 0.
\end{equation}
From \citep[Theorem 3]{sur_oed} on the zero convergence of integrality gap, we know
\begin{equation}
\Big|\phi\big(\Gamma_{post}(w^N)\big) - \phi\big(\Gamma_{post}(w^{N, int})\big)\Big| \to 0.
\end{equation}
Combine the above inequalities, we have the convergence of SQP integer solution
\begin{align}
&\ \Big|\phi\big(\Gamma_{post}(w^n)\big) - \phi\big(\Gamma_{post}(w^{N, int})\big)\Big|  \\
\le&\ \Big|\phi\big(\Gamma_{post}(w^n)\big) - \phi\big(\Gamma_{post}(w^{N})\big)\Big| + \Big|\phi\big(\Gamma_{post}(w^N)\big) - \phi\big(\Gamma_{post}(w^{N, int})\big)\Big| \\
\to&\ 0
\end{align}
and the proof is complete.
\end{proof}

In \S\ref{sec:interp}, we choose $N =c\log(n)$ to achieve the computational complexity $\mathcal{O}(n\log^2(n))$, but an important question is how to choose the constant $c$. One practical suggestion is to solve for problems with moderate sizes and get the exact solution (true minimum), and then adjust the constant $c$ by doubling it until the error in \eqref{error_eq:relax_erro} falls below a preassigned threshold. The trade off in the choice of $c$ should be clear: when $c$ is larger, the approximation is more accurate, but it is more computationally expensive.

\section{Temporal and Two-Dimensional LIDAR Problem}\label{sec:lidar}
\label{algo_sec:lidar}
In this section, we apply the sequential quadratic programming in \S\ref{sec:sqp} to solve a Bayesian inverse problem driven by partial differential equations. Specifically, our goal is to infer the initial condition of an advection-diffusion equation on a spatial and temporal domain, where the observable can be expressed as a truncated sum of integral equations so that all the convergence results would apply.

\subsection{Extend Convergence Results to Space-time Models}
Because we are adding an extra time domain, theorems need to be extended for time-dependent measurements. For this extension, we require that the measurements be taken at a fixed frequency.

\subsubsection{Parameter-to-observable Map}
Consider a compact domain $V$ in $\mathbb{R}^P$ and a time interval $[0,T]$. Suppose the measurement without noise has the integral-equation form: for $x\in\outdom$
\begin{equation}\label{eq:meas}
u(x,t)=\int_{\indom} f(x,y,t)u_0(y)\,\mathrm{d}y.
\end{equation}
In our example, $u(x,t)$ is the solution to a partial differential equation describing a dynamical system, and $f(x,y,t)$ is derived from solving the equation. We discretize the integral equation \eqref{eq:meas} as before, construct a matrix $F$ from $f(x,y,t)$, and divide the domain $\outdom$ ($\indom$ and $[0,T]$) into $n_x$ ($n_y$ and $n_t$) equally spaced intervals ($\Delta x =\mu(\outdom)/n_x, \Delta y=\mu(\outdom)/n_y, \Delta t=T/n_t$). Then,
$
\hat{u}=F\hat{u}_0\in\mathbb{R}^{n_xn_t\times1}, F\in\mathbb{R}^{n_xn_t\times n_y},
$
and
$$
\hat{u} = \big(u(x_1, t_1), u(x_1,t_2), ...,u(x_1,t_{n_t}), u(x_2,t_1), ...,u(x_2,t_{n_t}), ..., u(x_{n_x},t_1),u(x_{n_x}, t_{n_t}) \big)^T
$$
$$
F=
\begin{pmatrix}
f(x_1,y_1,t_1) & f(x_1,y_2,t_1) & \cdots & f(x_1,y_{n_y},t_1)\\
f(x_1,y_1,t_2) & f(x_1,y_2,t_2) & \cdots & f(x_1,y_{n_y},t_2)\\
\vdots & \vdots & & \vdots\\
f(x_1,y_1,t_{n_t}) & f(x_1,y_2,t_{n_t}) & \cdots & f(x_1,y_{n_y},t_{n_t})\\
f(x_2,y_1,t_1) & f(x_2,y_2,t_1) & \cdots & f(x_2,y_{n_y},t_1)\\
\vdots & \vdots & & \vdots\\
f(x_{n_x},y_1,t_{n_t}) & f(x_{n_x},y_2,t_{n_t}) & \cdots & f(x_{n_x},y_{n_y},t_{n_t})
\end{pmatrix}\Delta y
$$
and $\hat{u}_0\in \mathbb{R}^{n_y}$ is a discretization of $u_0(x)$ with $\hat{u}_{0,j}=u_0(y_j)$ ($j=1,2,...,n_y$). To figure out the $(i,j)^{th}$ entry of $F$, let $i\!=\!(i_1-1)n_t+i_2$ ($i_1\in\{1,2,...,n_x\}, i_2\in\{1,2,...,n_t\}$) and $j=1,2,...,n_y$, we have
\begin{equation}
F(i,j)=f(x_{i_1},y_j, t_{i_2})\Delta y.
\end{equation}
If $\indom = \outdom$, we use the same discretization, i.e. $x_j=y_j$ ($j=1,2...,n_x$) and $n_x=n_y$. 

\begin{remark}
$f(x,y,t)$ in \eqref{eq:meas} is not always continuous as a solution to PDEs, for example, $f(x,y,t)$ in a one-wave system is a delta function $\delta(x\!-\!at,y)$ where $a$ is the wave speed. 
\end{remark}

\subsubsection{Convexity of the Objective Function}
In our Bayesian framework, the posterior covariance matrix is given by
\begin{equation}
\postcov=\Big(F^TW^{1/2}\Gamma_{noise}^{-1}W^{1/2}F+\prcov^{-1}\Big)^{-1}.
\end{equation}
where $\Gamma_{noise}$ is the noise covariance matrix among measurements. We assume the measurement noise is only correlated in time, not in space, so $\Gamma_{noise}$ is block diagonal and the number of blocks equals the number of discrete points on the spacial domain $\outdom$.

\begin{lemma}
$tr(\postcov)$ and $\log\det(\postcov)$ are convex functions in the weight vector $w$.
\end{lemma}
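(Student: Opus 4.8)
The plan is to reduce the lemma to two classical convexity facts about the positive-definite cone, after first checking that the matrix being inverted is an affine function of $w$. The first step is to make the block structure explicit. Writing the spatial candidate locations as $x_1,\dots,x_{n_x}$, the weight $w_i$ is attached to the \emph{entire} time series collected at $x_i$, so the matrix $W$ conformable with $W^{1/2}F$ is $W=\diag(w_1\mathrm{I}_{n_t},\dots,w_{n_x}\mathrm{I}_{n_t})$, while by assumption $\Gamma_{noise}=\diag(B_1,\dots,B_{n_x})$ with each $B_i\in\R^{n_t\times n_t}$ symmetric positive definite (noise correlated only in time, measurements at a fixed frequency). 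Since $W^{1/2}$ is a scalar multiple of the identity on each time block, it commutes with $\Gamma_{noise}^{-1}$ block-by-block, and therefore
\begin{equation*}
W^{1/2}\Gamma_{noise}^{-1}W^{1/2}=\diag\big(w_1 B_1^{-1},\dots,w_{n_x}B_{n_x}^{-1}\big),
\end{equation*}
which is a \emph{linear} function of $w$. Consequently
\begin{equation*}
G(w):=F^TW^{1/2}\Gamma_{noise}^{-1}W^{1/2}F+\prcov^{-1}
\end{equation*}
is an affine map of $w$, and for every admissible $w$ (in particular $w\succeq0$) it is symmetric positive definite, because $\prcov^{-1}\succ0$ and $F^TW^{1/2}\Gamma_{noise}^{-1}W^{1/2}F\succeq0$.

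The second step is to invoke the standard facts that on the cone of symmetric positive definite matrices the map $X\mapsto tr(X^{-1})$ is convex and the map $X\mapsto\log\det X$ is concave (see \citep[\S3.1]{boyd}; for the former one may also note $tr(X^{-1})$ is a sum of matrix-fractional functions $v\mapsto v^TX^{-1}v$ over the standard basis vectors, each jointly convex in $(v,X)$). Since $\postcov=G(w)^{-1}$, we have $tr(\postcov)=tr\big(G(w)^{-1}\big)$ and $\log\det(\postcov)=-\log\det\big(G(w)\big)$. Precomposing a convex function with the affine map $w\mapsto G(w)$ preserves convexity, and negating a concave function gives a convex one; this yields the claim for both criteria.

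The only step that genuinely needs care is the first one — the linearity of $W^{1/2}\Gamma_{noise}^{-1}W^{1/2}$ in $w$. It rests on the compatibility between the block-diagonal structure of $\Gamma_{noise}$ (one block per spatial candidate) and the fact that $W$ is constant across each time series. If the noise were also correlated across distinct sensor locations, or if the weights varied within a single time series, then $W^{1/2}\Gamma_{noise}^{-1}W^{1/2}$ would in general no longer be affine in $w$, and neither $tr(\postcov)$ nor $\log\det(\postcov)$ need be convex; this is precisely why the ``fixed measurement frequency'' and ``noise correlated only in time'' hypotheses are imposed. Once that structural observation is in place, the rest is a direct appeal to convexity of $tr(X^{-1})$, concavity of $\log\det X$, and invariance of convexity under affine precomposition.
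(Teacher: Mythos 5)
Your proof is correct and follows essentially the same route as the paper's: both exploit the block structure (one weight per spatial location, noise block-diagonal in time) to write $\postcov^{-1}=\sum_k w_k F_k^T P_k F_k+\prcov^{-1}$ as an affine function of $w$, and then invoke convexity of $X\mapsto tr(X^{-1})$ and $X\mapsto\log\det(X^{-1})$ on the positive-definite cone together with affine precomposition. Your explicit remark that $W^{1/2}\Gamma_{noise}^{-1}W^{1/2}$ is linear in $w$ precisely because $W^{1/2}$ acts as a scalar on each time block is a slightly more careful statement of the same observation the paper makes implicitly.
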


\begin{proof}
We construct the matrix $W$ from the weight vector $w\!=\!(w_0,w_1,..,w_{n_x})$ as follows:
$$
W=\diag\{w_1,w_1,...,w_1,w_2, w_2,..,w_2,..,w_{n_x},w_{n_x},...,w_{n_x}\}\in\mathbb{R}^{n_xn_t \times n_xn_t}.
$$
Note that $\Gamma_{noise}^{-1}$ is also block diagonal, and we write $\ncov^{-1}$, $F$ and $W$ as
$$
\ncov^{-1}=
\begin{pmatrix}
P_1 & \cdots & \cdots & \cdots\\
\cdots & P_2 & \cdots & \cdots\\
\vdots & \vdots & \ddots & \vdots\\
\cdots & \cdots & \cdots & P_{n_x}
\end{pmatrix}
\qquad
F=
\begin{pmatrix}
F_1\\
F_2\\
\vdots\\
F_{n_x}
\end{pmatrix}
\qquad
W=
\begin{pmatrix}
w_1I_{n_t} & \cdots & \cdots & \cdots\\
\cdots & w_2I_{n_t} & \cdots & \cdots\\
\vdots & \vdots & \ddots & \vdots\\
\cdots & \cdots & \cdots & w_{n_x}I_{n_t}
\end{pmatrix}
$$
where $P_k\equiv P\in\mathbb{R}^{n_t\times n_t}$ is the precision matrix for each location and $F_k\in\mathbb{R}^{n_t\times n_y}$. 
$$
\postcov=\Big(\sum_{k=1}^{n_x}w_kF_k^TP_kF_k+\prcov^{-1}\Big)^{-1}
$$
The desired results follow because $tr(X^{-1})$ and $\log\det(X^{-1})$ are both convex in $X$ (see \citep[Exercise 3.26]{boyd}), and $X$ is linear in $w$.
\end{proof}

\subsubsection{Extend the Convergence Theory}
The inverse of $\postcov$ is 
\begin{equation}
\postcov^{-1}=\sum_{k=1}^{n_x}w_kF_k^TPF_k+\prcov^{-1},
\end{equation}
and if we denote $f_{i,j,s} = f(x_i, y_j, t_s)$, the matrix $F_k^TPF_k$ can be written as 
\[
\begin{pmatrix}
\sum_{s_1,s_2=1}^{n_t}f_{k,1,s_1}P_{s_1,s_2}f_{k,1,s_2} & \cdots & \sum_{s_1,s_2=1}^{n_t}f_{k,1,s_1}P_{s_1,s_2}f_{k,n_y,s_2}\\
\vdots & \ddots & \vdots\\
\sum_{s_1,s_2=1}^{n_t}f_{k,n_y,s_1}P_{s_1,s_2}f_{k,1,s_2} & \cdots & \sum_{s_1,s_2=1}^{n_t}f_{k,n_y,s_1}P_{s_1,s_2}f_{k,n_y,s_2}
\end{pmatrix}(\Delta y)^2.
\]
Therefore, the $(i,j)^{th}$ entry in $\postcov^{-1}$ is
$$
\postcov^{-1}(i,j) = \Delta y\sum_{k=1}^{n_x}\sum_{s_1=1}^{n_t}\sum_{s_2=1}^{n_t}w_kf(x_k, t_{s_1},y_i)P_{s_1,s_2}f(x_k,t_{s_2},y_j)\Delta x.
$$
If measurements are collected every few seconds or minutes within a time range, i.e. $n_t$ is a fixed integer, then for any precision matrix $P$, we are in the same setting as \citep{sur_oed}, and all the convergence proofs can be extended trivially.

\subsection{Two-dimensional Advection-diffusion Equation with External Source}
The advection-diffusion equation is a combination of diffusion and advection equations, and we first solve the diffusion equation (or heat equation), which lays the foundation to solving the advection-diffusion equation. We add an external force to the equation to keep the system from entering a stationary state. Later we shall see this external force has no effect on the design when the goal is to infer the initial condition.

\subsubsection{Two-dimensional Heat Equation with External Source}
Consider the heat equation on a two-dimensional domain $[-1,1]\times[-1, 1]$ with homogeneous Dirichlet boundary conditions, i.e. $u(x,y,t) = 0 \mbox{ for }(x, y)\mbox{ on the boundary}$, and an external force $f(x,y,t)$:
\begin{equation}\label{lidar_eq:2d_heat_ext}
u_t - \mu\nabla u = u_t - \mu \big(\frac{\partial^2 u}{\partial x^2} + \frac{\partial^2 u}{\partial y^2} \big) = f(x, y, t)
\end{equation}
The initial condition $u(x, y, 0) = u_0(x, y)$ is the parameter of interest. Using a variant of separation of variables, we assume the solution $u(x,y,t)$ has the following form
\begin{equation}
u(x, y,t) = \sum_k T_k(t)X_k(x,y)
\end{equation}
and then apply the tensor product of Fourier basis $\{\sin(\frac{k\pi x}{2}), \cos(\frac{k\pi x}{2})\}_{k\ge0}$ in one dimension, and rewrite $u(x,y,t)$ as
\begin{align}
u(x, y, t) = & \sum_{k_1\ge0}\sum_{k_2\ge0} \Big\{ T_{k_1, k_2}^{(1)}(t)\sin(\frac{k_1\pi x}{2})\sin(\frac{k_2\pi y}{2}) + T_{k_1, k_2}^{(2)}(t)\sin(\frac{k_1\pi x}{2})\cos(\frac{k_2\pi y}{2}) \\ & + T_{k_1, k_2}^{(3)}(t)\cos(\frac{k_1\pi x}{2})\sin(\frac{k_2\pi y}{2}) + T_{k_1, k_2}^{(4)}(t)\cos(\frac{k_1\pi x}{2})\cos(\frac{k_2\pi y}{2}) \Big\}.
\end{align}
We treat the four terms separately, and exemplify it with $\sin(\frac{k_1\pi x}{2})\sin(\frac{k_2\pi y}{2})$. Results for the other three terms can be derived similarly. Let 
\begin{equation}
u^{(1)}(x, y, t) = \sum_{k_1\ge0}\sum_{k_2\ge0} T_{k_1,k_2}^{(1)}(t)\sin(\frac{k_1\pi x}{2})\sin(\frac{k_2\pi y}{2}),
\end{equation}
and apply the diffusion dynamics to $u^{(1)}(x, y, t)$
\begin{equation}
\frac{\partial u^{(1)}}{\partial t} - \mu (\frac{\partial^2 u^{(1)}}{\partial x^2}+ \frac{\partial^2 u^{(1)}}{\partial y^2})  = \sum_{k_1\ge0}\sum_{k_2\ge0} \Big(\frac{\partial T_{k_1,k_2}^{(1)}(t)}{\partial t} + \mu\frac{k_1^2 + k_2^2}{4}\pi^2T_{k_1,k_2}^{(1)}(t) \Big)\sin(\frac{k_1\pi x}{2})\sin(\frac{k_2\pi y}{2}).
\end{equation}
The external force also has a Fourier expansion, and we solve the following equation:
\begin{equation}\label{algo_eq:sub_pde}
\begin{cases}
\frac{\partial T_{k_1,k_2}^{(1)}(t)}{\partial t} + \mu\frac{k_1^2 + k_2^2}{4}\pi^2T_{k_1,k_2}^{(1)}(t) = f_{k_1,k_2}(t)\\
T_{k_1,k_2}^{(1)}(0) = c_{k_1,k_2}
\end{cases}
\end{equation}
where $f_{k_1,k_2}(t)$ and $c_{k_1,k_2}$ are the Fourier coefficients for the external force $f(x,y,t)$ and the initial condition $u_0(x,y)$ respectively, with respect to the basis $\sin(\frac{k_1\pi x}{2})\sin(\frac{k_2\pi y}{2})$:
\begin{align}
 f_{k_1,k_2}(t) & = \iint_{[-1,1]\times[-1,1]}f(x,y,t)\sin(\frac{k_1\pi x}{2})\sin(\frac{k_2\pi y}{2})\,\mathrm{d}x\mathrm{d}y,\\ c_{k_1,k_2} & = \iint_{[-1,1]\times[-1,1]}u_0(x,y)\sin(\frac{k_1\pi x}{2})\sin(\frac{k_2\pi y}{2})\,\mathrm{d}x\mathrm{d}y.
\end{align}
The solution to \eqref{algo_eq:sub_pde} is given by
\begin{equation}
T_{k_1,k_2}^{(1)}(t) = \exp\{-\mu\frac{k_1^2 + k_2^2}{4}\pi^2t\}c_{k_1,k_2} + \int_0^t\exp\{-\mu\frac{k_1^2 + k_2^2}{4}\pi^2(t-s)\}f_{k_1,k_2}(s)\,\mathrm{d}s.
\end{equation}
After working out the other three terms, we get the solution to \eqref{lidar_eq:2d_heat_ext}:
\begin{align}
u(t,x,y) = \sum_{k_1\ge0}\sum_{k_2\ge0} A_{k_1,k_2}(t)\sin(\frac{k_1\pi x}{2})\sin(\frac{k_2\pi y}{2}) + B_{k_1,k_2}(t)\sin(\frac{k_1\pi x}{2})\cos(\frac{k_2\pi y}{2}) \\+ C_{k_1,k_2}(t)\cos(\frac{k_1\pi x}{2})\sin(\frac{k_2\pi y}{2}) + D_{k_1,k_2}(t)\cos(\frac{k_1\pi x}{2})\cos(\frac{k_2\pi y}{2}) \label{lidar_eq:sol_heat_force}
\end{align}
where
\begin{align*}
A_{k_1,k_2}(t) = \exp\{-\mu\frac{k_1^2 + k_2^2}{4}\pi^2t\} \iint_{\indom} &u_0(x,y)\sin(\frac{k_1\pi x}{2})\sin(\frac{k_2\pi y}{2})\,\mathrm{d}x\mathrm{d}y \\&+ \int_0^t\exp\{-\mu\frac{k_1^2+k_2^2}{4}\pi^2(t-s)\}f_{k_1,k_2}^{(1)}(s)\,\mathrm{d}s  \\
B_{k_1,k_2}(t) = \exp\{-\mu\frac{k_1^2 + k_2^2}{4}\pi^2t\} \iint_{\indom} &u_0(x,y)\sin(\frac{k_1\pi x}{2})\cos(\frac{k_2\pi y}{2})\,\mathrm{d}x\mathrm{d}y \\&+ \int_0^t\exp\{-\mu\frac{k_1^2+k_2^2}{4}\pi^2(t-s)\}f_{k_1,k_2}^{(2)}(s)\,\mathrm{d}s \\
C_{k_1,k_2}(t) = \exp\{-\mu\frac{k_1^2 + k_2^2}{4}\pi^2t\} \iint_{\indom} &u_0(x,y)\cos(\frac{k_1\pi x}{2})\sin(\frac{k_2\pi y}{2})\,\mathrm{d}x\mathrm{d}y \\&+ \int_0^t\exp\{-\mu\frac{k_1^2+k_2^2}{4}\pi^2(t-s)\}f_{k_1,k_2}^{(3)}(s)\,\mathrm{d}s \\
D_{k_1,k_2}(t) = \exp\{-\mu\frac{k_1^2 + k_2^2}{4}\pi^2t\} \iint_{\indom} &u_0(x,y)\cos(\frac{k_1\pi x}{2})\cos(\frac{k_2\pi y}{2})\,\mathrm{d}x\mathrm{d}y \\&+ \int_0^t\exp\{-\mu\frac{k_1^2+k_2^2}{4}\pi^2(t-s)\}f_{k_1,k_2}^{(4)}(s)\,\mathrm{d}s.
\end{align*}
From boundary conditions, $A_{k_1,k_2}$ is for $k_1$ even and $k_2$ even, $B_{k_1,k_2}$ is for $k_1$ even and $k_2$ odd, $C_{k_1,k_2}$ is for $k_1$ odd and $k_2$ even, $D_{k_1,k_2}$ is for $k_1$ odd and $k_2$ odd. 

\subsubsection{Advection-diffusion Equation with External Source}
The two-dimensional advection-diffusion equation with external source is
\begin{equation}\label{lidar_eq:ad_diff}
\frac{\partial u}{\partial t} + c_1\frac{\partial u}{\partial x} + c_2\frac{\partial u}{\partial y} - \mu\big(\frac{\partial^2 u}{\partial x^2} + \frac{\partial^2 u}{\partial y^2}\big) = f(x,y,t),\quad -1<x, y<1, \quad t\in[0,T].
\end{equation}
where $c = (c_1, c_2)$ is the velocity constant and $\mu$ is the diffusivity. A change of variables $u(x,y,t) = v(x,y,t)e^{\alpha x +\beta  y +\gamma t}$ transforms \eqref{lidar_eq:ad_diff} into a diffusion equation that we already solved, and here are the details: We rewrite the equation \eqref{lidar_eq:ad_diff} in terms of $v(x,y,t)$:
\begin{align}
\frac{\partial v}{\partial t} + (c_1-2\mu\alpha)\frac{\partial v}{\partial x} + (c_2&-2\mu\beta)\frac{\partial v}{\partial y} + (\gamma + c_1\alpha + c_2\beta - \mu\alpha^2 - \mu\beta^2)v - \mu\big(\frac{\partial^2 v}{\partial x^2} + \frac{\partial^2 v}{\partial x^2}\big) \\
=&\ f(x,y,t)\exp\{-\alpha x - \beta y - \gamma t\}.
\end{align}
Set some coefficients to zero: 
$$
\begin{cases}
c_1 - 2\mu\alpha = 0\\
c_2 - 2\mu\beta = 0\\
\gamma + c_1\alpha + c_2\beta - \mu\alpha^2 - \mu\beta^2 = 0
\end{cases}
\quad\Rightarrow\quad
\begin{cases}
\alpha = c_1/2\mu\\
\beta = c_2/2\mu\\
\gamma = -(c_1^2 + c_2^2)/4\mu.
\end{cases}
$$
Now $v(x, y, t)$ satisfies the heat equation with homogeneous Dirichlet conditions: 
\begin{equation}
\begin{cases}
v_t - \mu(v_{xx} + v_{yy}) = \tilde{f}(x,y,t)\\
v(x,y,t) = 0, \mbox{ for }(x, y)\mbox{ on the boundary.}\\
v_0(x, y) =  \exp\{- c_1x/2\mu - c_2y/2\mu\}u_0(x,y),
\end{cases}
\end{equation}
where $\tilde{f}(x,y,t) = \exp\{(c_1^2 + c_2^2)t/4\mu - c_1x/2\mu - c_2y/2\mu\}f(x,y,t) $. Its relation to $u(x,y,t)$ is given by
\begin{equation}
u(x,y,t) = \exp\{-(c_1^2 + c_2^2)t/4\mu + c_1x/2\mu + c_2y/2\mu\}v(x,y,t).
\end{equation}
Based on the result \eqref{lidar_eq:sol_heat_force} on heat equation, the solution $u(x,y,t)$ is:
\begin{align}\label{lidar_eq:sol_to_ad}
u(x,y,t) = &\exp\{-(c_1^2 + c_2^2)t/4\mu + c_1x/2\mu + c_2y/2\mu\} \sum_{k_1\ge0}\sum_{k_2\ge0} \\
&\Big\{ A_{k_1,k_2}(t)\sin(\frac{k_1\pi x}{2})\sin(\frac{k_2\pi y}{2}) + B_{k_1,k_2}(t)\sin(\frac{k_1\pi x}{2})\cos(\frac{k_2\pi y}{2}) \\
&+ C_{k_1,k_2}(t)\cos(\frac{k_1\pi x}{2})\sin(\frac{k_2\pi y}{2}) + D_{k_1,k_2}(t)\cos(\frac{k_1\pi x}{2})\cos(\frac{k_2\pi y}{2}) \Big\}.
\end{align}
where
\begin{align*}
A_{k_1,k_2}(t) = \phi_{k_1,k_2}^{(1)}+ \int_0^t \exp\{-\mu\frac{k_1^2 + k_2^2}{2}\pi^2(t-s)\}\tilde{f}_{k_1,k_2}^{(1)}(s)\,\mathrm{d}s, \quad\mbox{for }k_1\mbox{ even, }k_2\mbox{ even;}\\
B_{k_1,k_2}(t) = \phi_{k_1,k_2}^{(2)}+ \int_0^t \exp\{-\mu\frac{k_1^2 + k_2^2}{2}\pi^2(t-s)\}\tilde{f}_{k_1,k_2}^{(2)}(s)\,\mathrm{d}s,\quad\mbox{for }k_1\mbox{ even, }k_2\mbox{ odd;} \\
C_{k_1,k_2}(t) = \phi_{k_1,k_2}^{(3)}+ \int_0^t \exp\{-\mu\frac{k_1^2 + k_2^2}{2}\pi^2(t-s)\}\tilde{f}_{k_1,k_2}^{(3)}(s)\,\mathrm{d}s,\quad\mbox{for }k_1\mbox{ odd, }k_2\mbox{ even;}\\
D_{k_1,k_2}(t) = \phi_{k_1,k_2}^{(4)}+ \int_0^t \exp\{-\mu\frac{k_1^2 + k_2^2}{2}\pi^2(t-s)\}\tilde{f}_{k_1,k_2}^{(4)}(s)\,\mathrm{d}s,
\quad\mbox{for }k_1\mbox{ odd, }k_2\mbox{ odd;}
\end{align*}

$\{\phi_{k_1,k_2}^{(i)}\}$ is related to the Fourier coefficients of the initial condition $v_0$ (or equivalently, $u_0$) as in \eqref{lidar_eq:sol_heat_force}. We see that the solution $u(x,y,t)$ is an additive sum of two components: one from the initial condition $u_0(x,y)$, and the other from the external source $f(x,y,t)$. The two sources act independently on the solution, so the external source does not play a role in the selection of sensor locations, when we use a Bayesian framework of Gaussian distributions to infer the initial condition from time-space measurements.

\subsection{Numerical Results}
We provide numerical results on selecting the optimal sensing directions to estimate the initial condition of the advection-diffusion equation. Here is the problem description: suppose a lidar is sitting at the origin of a unit circle ($\outdom$), and it collects data $u(x,y,t)$ by sending out laser beams and detects the reflections; we need to determine the optimal directions for the lidar to release the beams and collect data. Our parameter-to-observable mapping is directly from the solution to \eqref{lidar_eq:ad_diff}, which is an integral equation
\begin{equation}
u(x,y,t) = \mathcal{F}(u_0) = \iint_{[-1,1]\times[-1,1]} \mathrm{F}(x,y,t)u_0(x,y)\,dxdy
\end{equation}
where $\mathrm{F}$ is given in the solution \eqref{lidar_eq:sol_to_ad}. For discretizations, we divide the angle of $2\pi$ into $n_d$ parts so that the circle has $n_d$ sectors with the same area, and each beam goes across the center of each sector. We also discretize the radius into $n_r$ parts with equal length. A weight variable is attached to each sector, and discretization points along the same radius have the same weight. The goal is to select a proportion of sectors and measure data on those radiuses, in order to best infer $u_0$ which is defined on a slightly larger square domain $[-1,1]\times[-1,1]$ ($\indom$), which is discretized by regular grid of size $n_x\times n_x$. 

\begin{figure}[H]
\centering
\begin{minipage}{.33\textwidth}
  \centering
  \includegraphics[height=1.4in, width=2.0in]{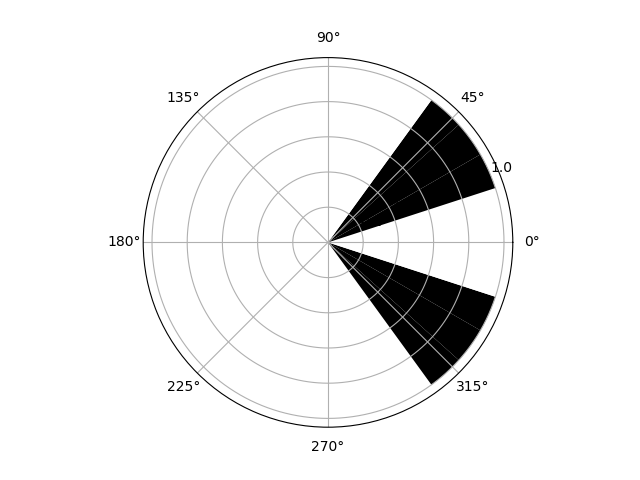}
\end{minipage}%
\begin{minipage}{.33\textwidth}
  \centering
  \includegraphics[height=1.4in, width=2.0in]{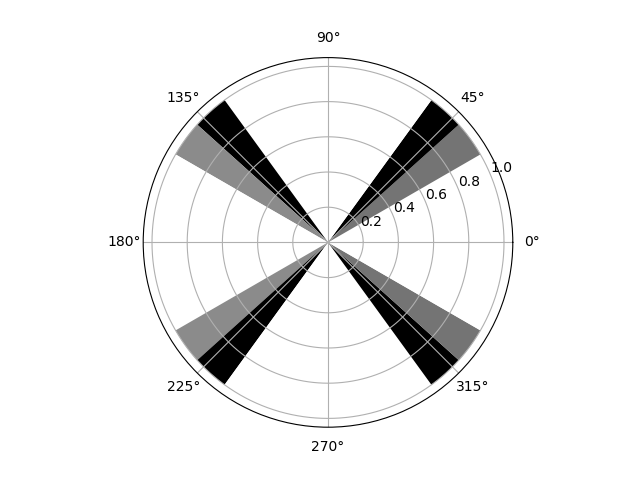}
\end{minipage}
\begin{minipage}{.33\textwidth}
  \centering
  \includegraphics[height=1.4in, width=2.0in]{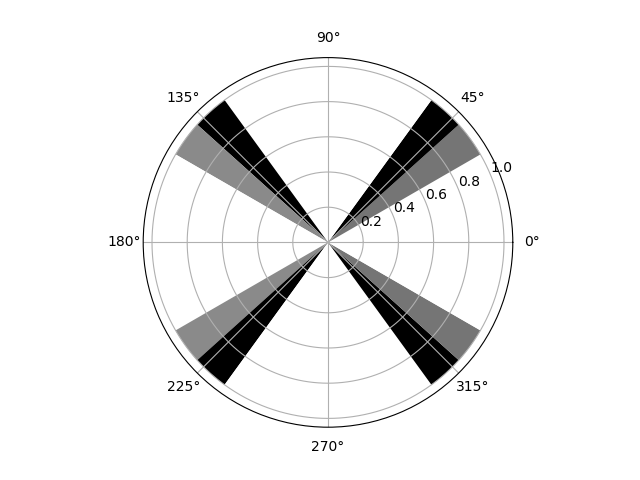}
\end{minipage}
\caption{Relaxed solution (with full $F$) for $p = 1, 2, 3$ respectively}\label{fig:relax}
\end{figure}

\begin{figure}[H]
\centering
\begin{minipage}{.33\textwidth}
  \centering
  \includegraphics[height=1.4in, width=2.0in]{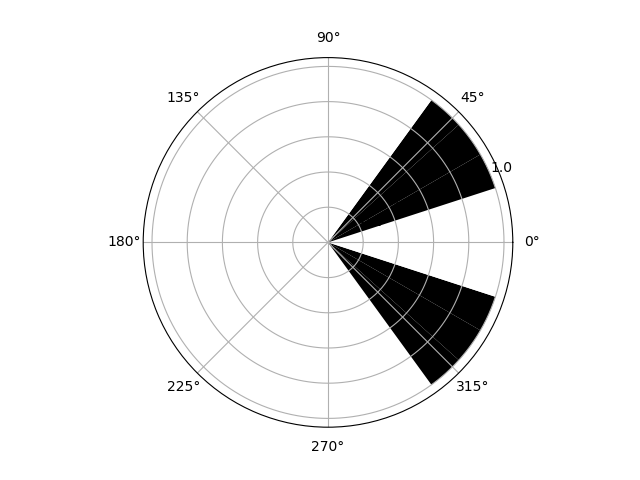}
\end{minipage}%
\begin{minipage}{.33\textwidth}
  \centering
  \includegraphics[height=1.4in, width=2.0in]{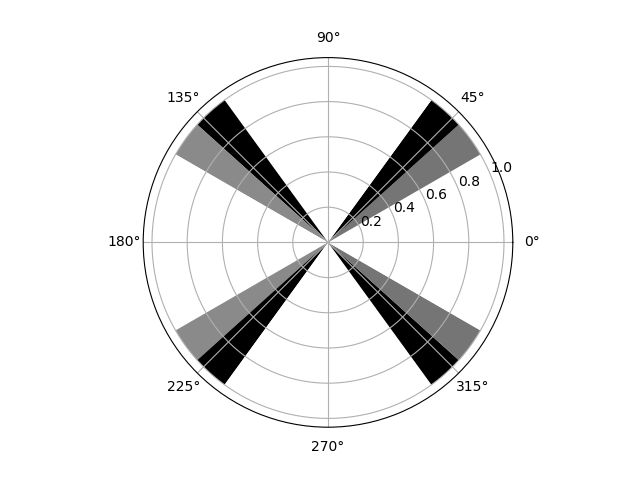}
\end{minipage}
\begin{minipage}{.33\textwidth}
  \centering
  \includegraphics[height=1.4in, width=2.0in]{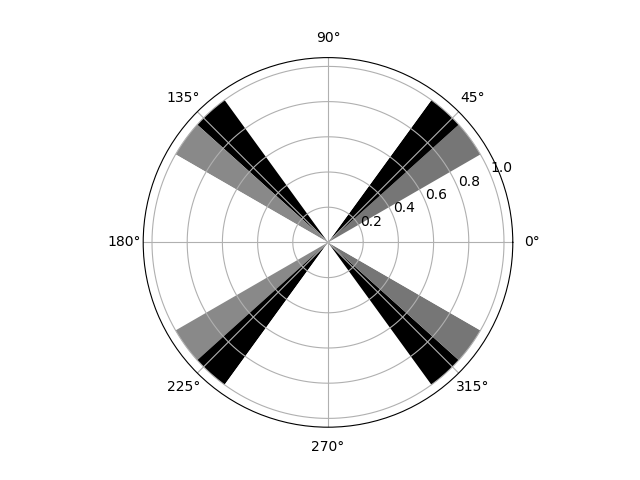}
\end{minipage}
\caption{SQP solution (with low-rank $F$) for $p = 1, 2, 3$ respectively}
\end{figure}

\begin{figure}[H]
\centering
\begin{minipage}{.33\textwidth}
  \centering
  \includegraphics[height=1.4in, width=2.0in]{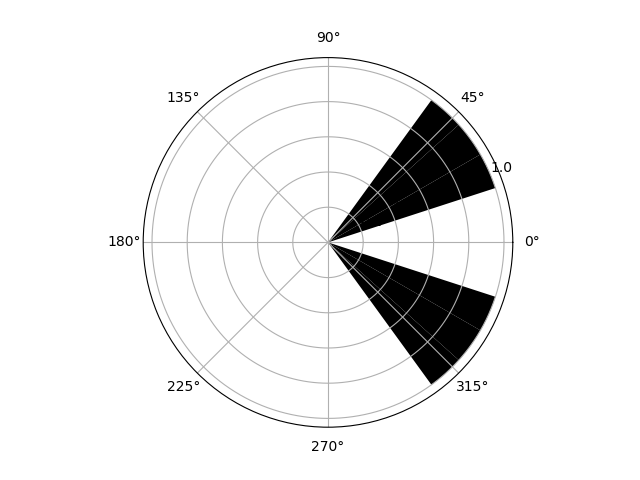}
\end{minipage}%
\begin{minipage}{.33\textwidth}
  \centering
  \includegraphics[height=1.4in, width=2.0in]{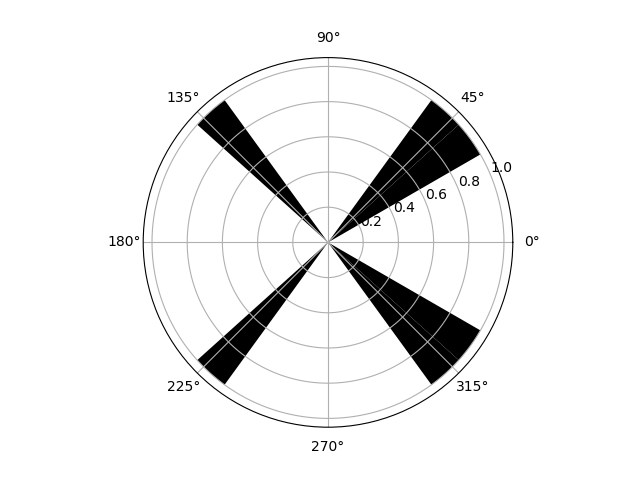}
\end{minipage}
\begin{minipage}{.33\textwidth}
  \centering
  \includegraphics[height=1.4in, width=2.0in]{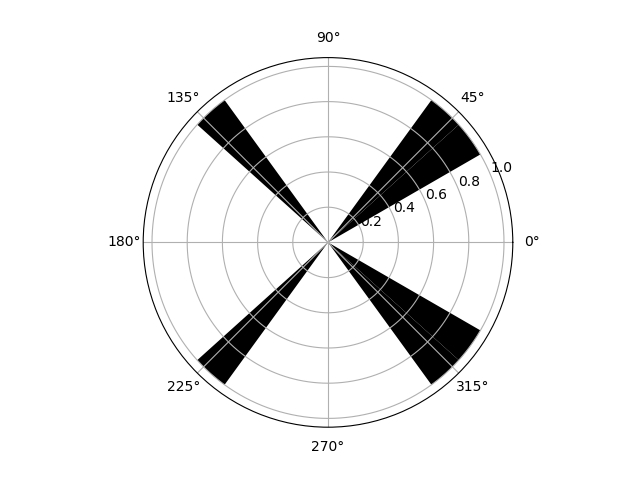}
\end{minipage}
\caption{Sum-up Rounding for $p = 1, 2, 3$ respectively (based on relaxed solutions)}
\end{figure}

The constants we choose in the equation \eqref{lidar_eq:ad_diff} are $c_1 = 0.1, c_2 = 0, \mu = 1.0, T = 1, n_t = 5, r = 0.2$, and the noise ratio $\alpha=\rsigma$ is 0.01. We remind the meaning of these constants: $c_1, c_2$ are the velocities along the $x$ and $y$ axis respectively, $\mu$ is the diffusivity constant, $n_t$ is a fixed integer denoting the number of measurements in $[0, T]$, and $r$ is the proportion of selected sectors. The covariance matrix in time is set to be identity at the moment. For the results below, $n_d=n_r=n_x=30$, and the velocity $(c_1,c_2)$ is going from left to right (the advection term can be thought of as air movement or wind when $u$ is the concentration of a substance in the air). the problem and the design are symmetric to the x axis (see Figure. \ref{fig:relax}). The $\epsilon$ in the stopping criterion of Algorithm \ref{sqp_algo:sqp} is $10^{-3}$.

Because the solution to advection-diffusion equation in \eqref{lidar_eq:sol_to_ad} is an integral of an infinite sum, we truncate the sum by specifying a hyperparameter $p$ and take the dominating terms with $k_1, k_2\le p$. We can determine the value of $p$ from a sanity check (see Figure. \ref{fig:sanity}), where we try to recover the initial state by looking at the truncated solution at $t=0$. We observe that when $p=3$, the values do not change further, and we use $p = 3$. Note this choice of $p$ is subject to the choice of $c_1, c_2$ and $\mu$ in the equation, especially when $\mu$ is small, a larger value of $p$ is required.

\begin{figure}[!ht]
   \centering
   \subfloat[][]{\includegraphics[width=.33\textwidth]{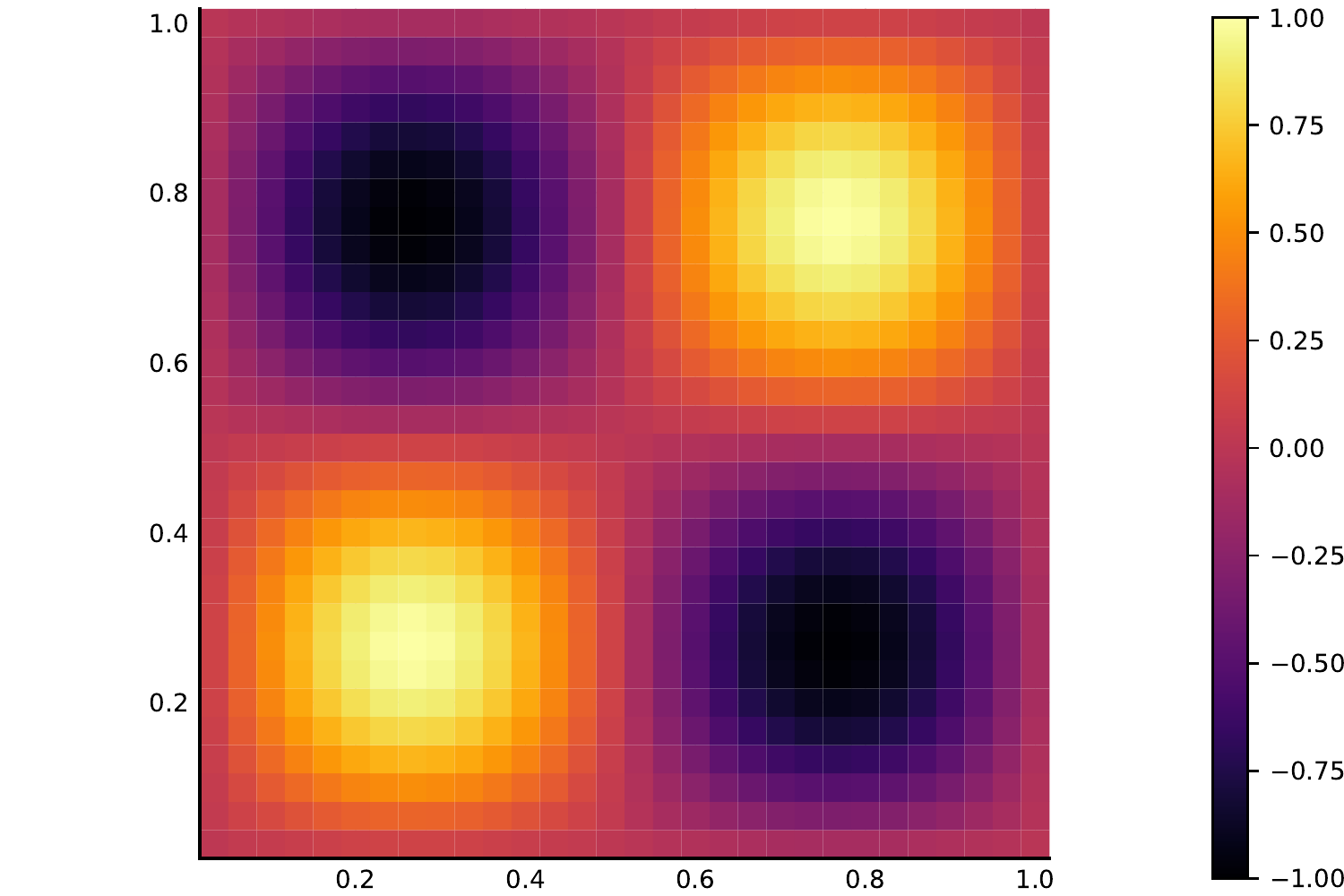}}\quad
   \subfloat[][]{\includegraphics[width=.33\textwidth]{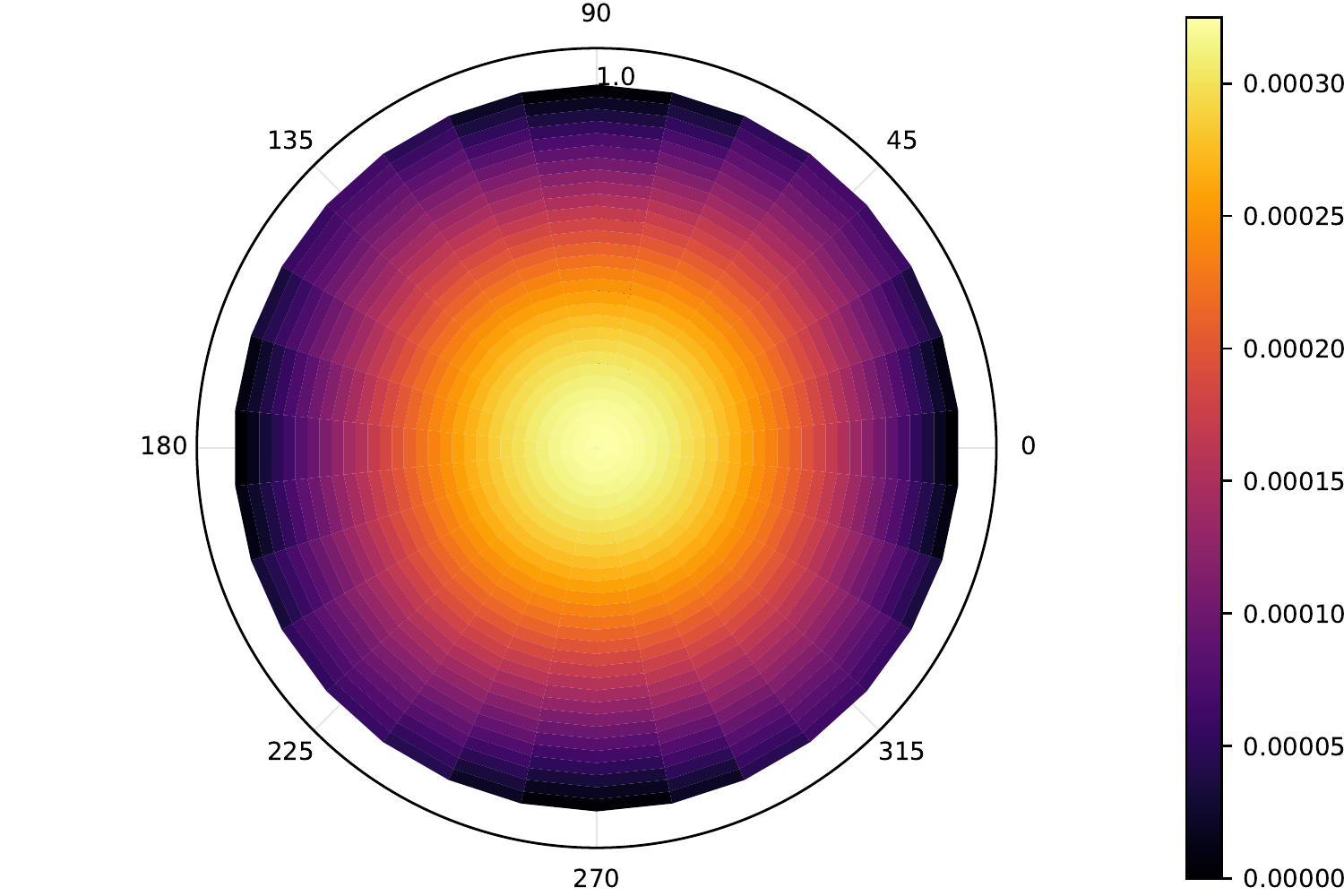}}\\
   \subfloat[][]{\includegraphics[width=.33\textwidth]{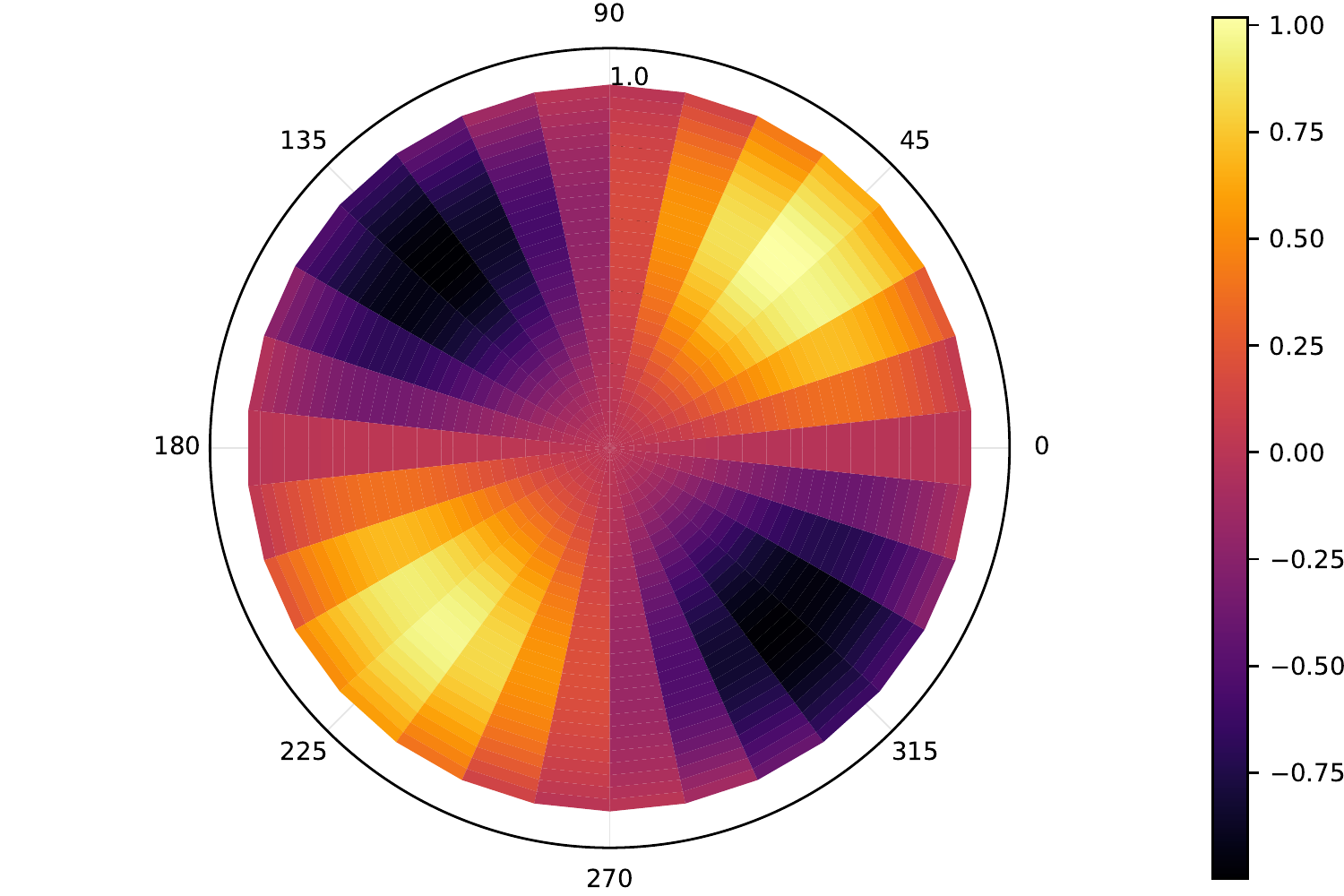}}\quad
   \subfloat[][]{\includegraphics[width=.33\textwidth]{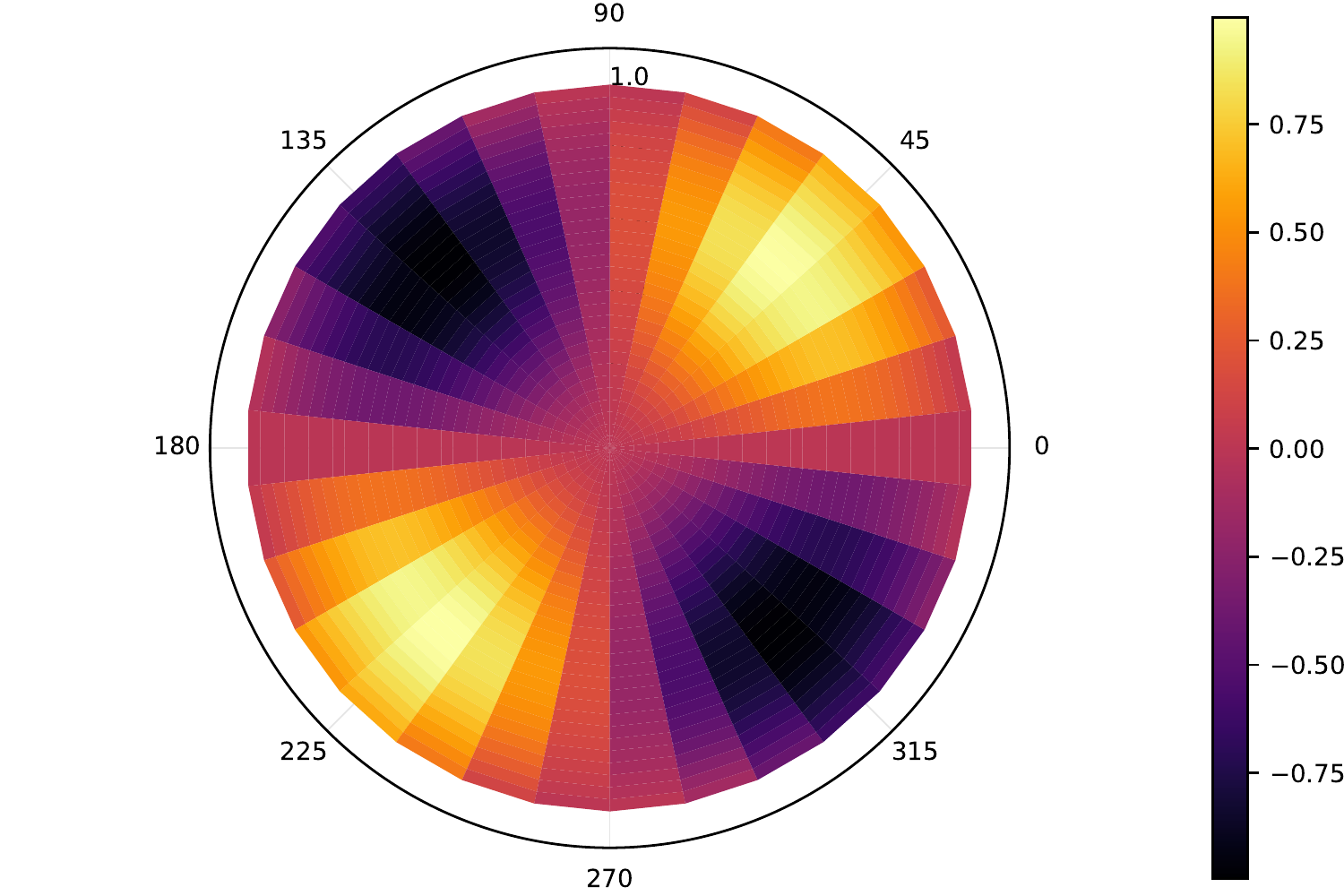}}
   \caption{(a) Initial state: $u_0(x,y) = \sin(\pi x)(\pi y)$ on $[-1,1]\times[-1,1]$; (b) Recover the initial state using dominant terms with $k_1,k_2\le 1$; (c) Recover the initial state using dominant terms with $k_1,k_2\le 2$; (d) Recover the initial state using dominant terms with $k_1,k_2\le 3$.}
   \label{fig:sanity}
\end{figure}

We examine the performance of SQP by looking at the computation time in comparison with the $\textit{Ipopt}$ package in Julia, and its integrality gap in the objective, i.e. the difference in \eqref{error_eq:gap_full} with the full $F$ and the approximated $F_s$ respectively. We use different numbers of interpolation points $c\cdot\log(n)$ by choosing the constant $c = 1, 2, 4, 8$, and let $n_d = n_r=n_x$. When $n_r = 30$, it takes $\textit{Ipopt}$ about 1.5 hours to compute the solution (see Figure \ref{fig:comp_time}), while SQP needs less than a minute to get a sufficiently good approximation. We would like to mention that the stopping criterion in the $\textit{Ipopt}$ package is $10^{-6}$ in order to get the true minimum, and it is ``unfair'' to compare the computation time directly with the SQP algorithm ($\epsilon=10^{-3}$). However, as the stopping criterion only affects the number of iterations, the computation time of the exact method is still proportional to what we have seen in Figure \ref{fig:comp_time}, and it is much slower than SQP. 

In Figure \ref{fig:gap_lowrank}, the result for $c=1$ is missing because the gap is identically zero which implies the relaxed problem \eqref{eq:doe_lowrank} has an integer solution, and SUR found it. However, this does not suggest we should choose $c=1$, because there are few interpolation points and $F_s$ is not a good approximation of $F$, so the design is not necessarily good (see Figure \ref{fig:gap_full}), and actually the design it returns is uniform everywhere. When we increase $c$ or the number of interpolation points, the integrality gap becomes smaller. In \S\ref{sec:error}, we show the integrality gap in the objective with the full $F$ converges to zero, which is illustrated in Figure \ref{fig:gap_full}. We observe the gap decreases faster for larger $c$, but it is not monotone.

\begin{figure}[H]
\centering
\begin{minipage}[t]{.33\textwidth}
  \centering
  \includegraphics[height=2.0in, width=2.0in]{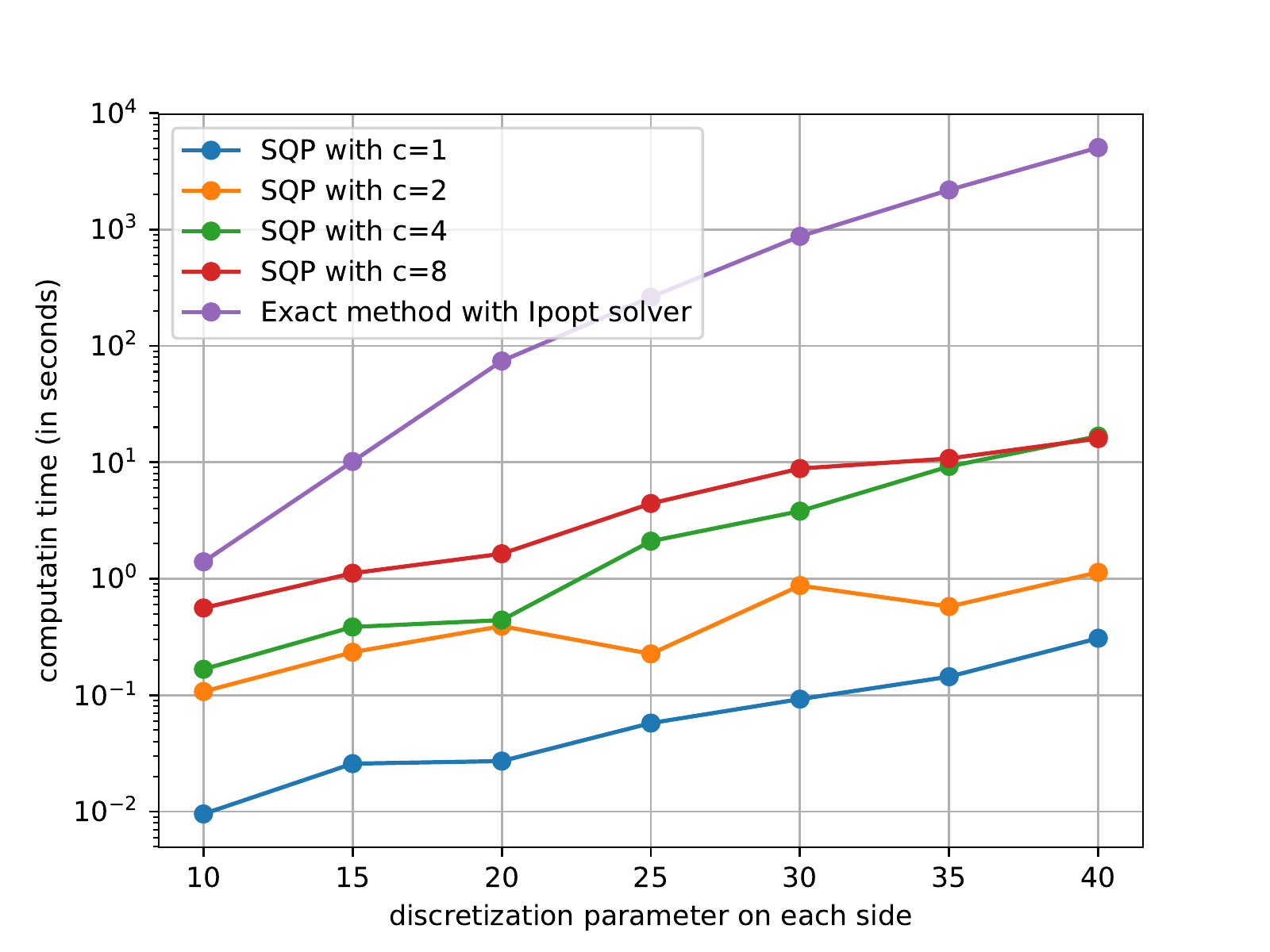}
  \captionsetup{justification=centering}
  \captionof{figure}{Computation time}
  \label{fig:comp_time}
\end{minipage}%
\begin{minipage}[t]{.33\textwidth}
  \centering
  \includegraphics[height=2.0in, width=2.0in]{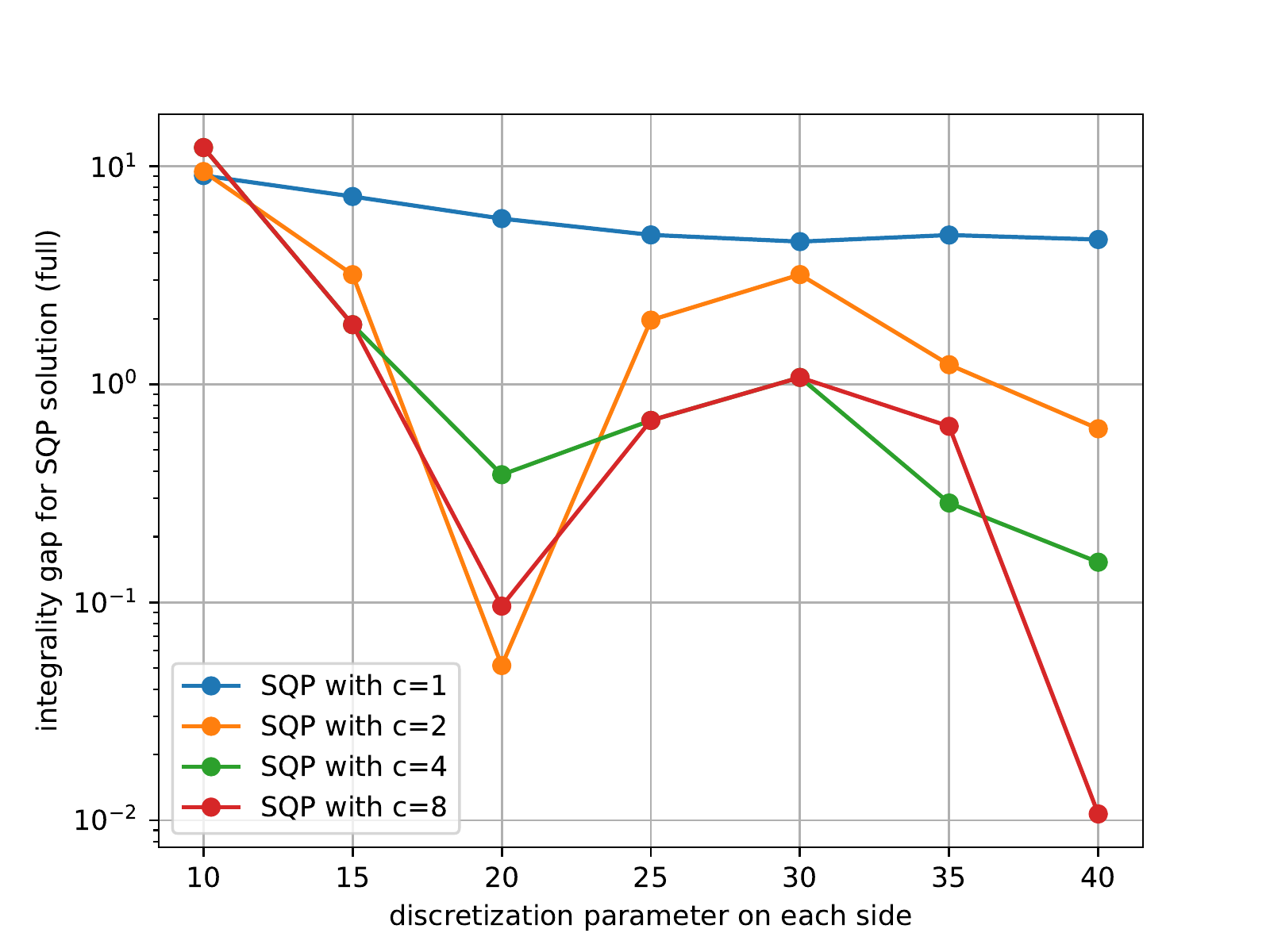}
  \captionsetup{justification=centering}
  \captionof{figure}{Integrality gap\\ (full $F$)}
  \label{fig:gap_full}
\end{minipage}
\begin{minipage}[t]{.33\textwidth}
  \centering
  \includegraphics[height=2.0in, width=2.0in]{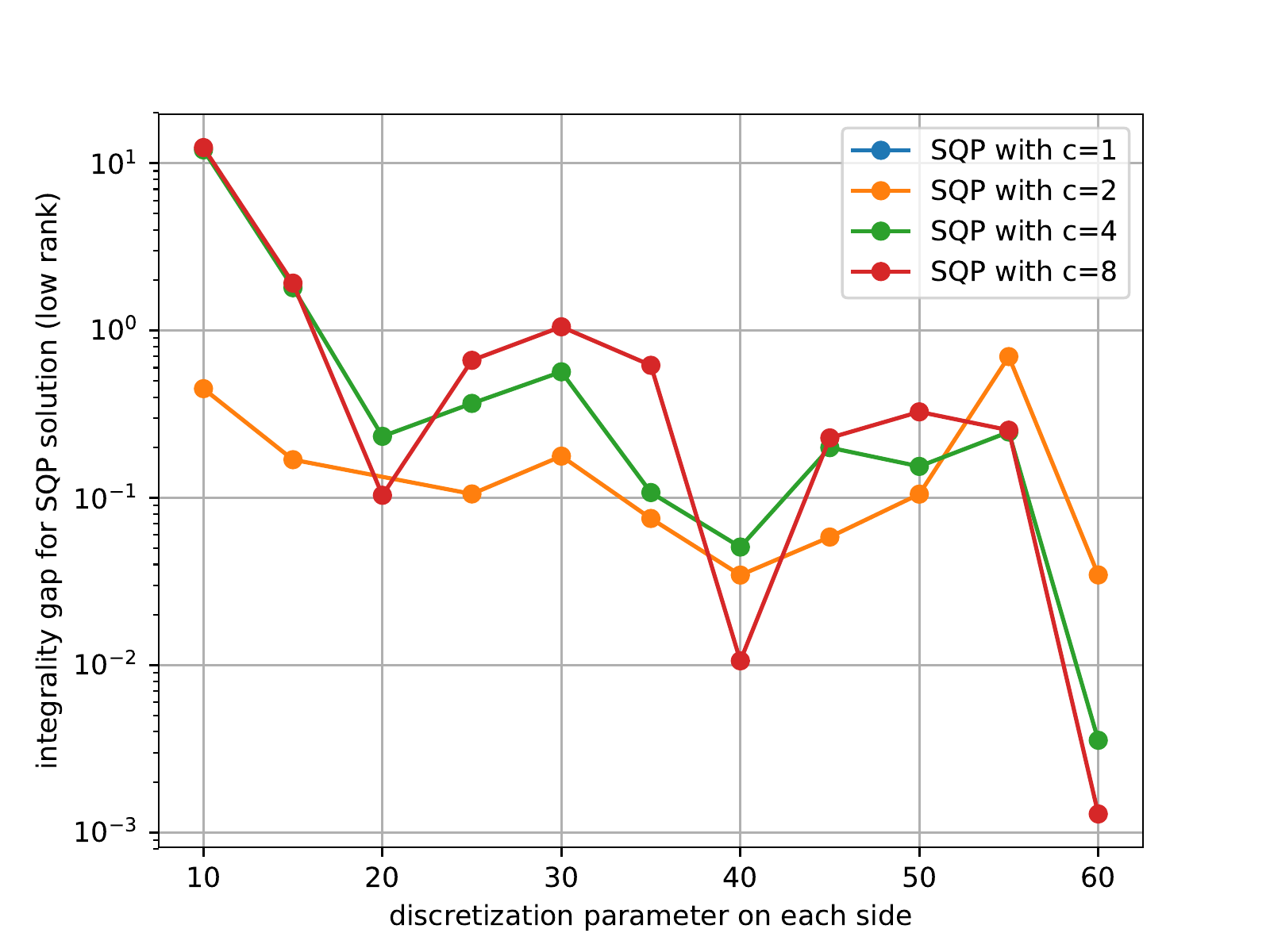}
  \captionsetup{justification=centering}
  \captionof{figure}{integrality gap\\ (low rank $F$)}
  \label{fig:gap_lowrank}
\end{minipage}
\end{figure}

To see the effect of $c_1, c_2$ and $\mu$ on the optimal sensing directions, we conduct more experiments with the $\textit{Ipopt}$ package in Julia. The following figures give the exact relaxed solution for varying values of $c_1$ and $\mu$, but fixed $c_2=0$ and $p=3$. Again, $n_d = n_r = 30$.

\begin{figure}[H]
\centering
\begin{minipage}{.33\textwidth}
  \centering
  \includegraphics[height=1.5in, width=2.0in]{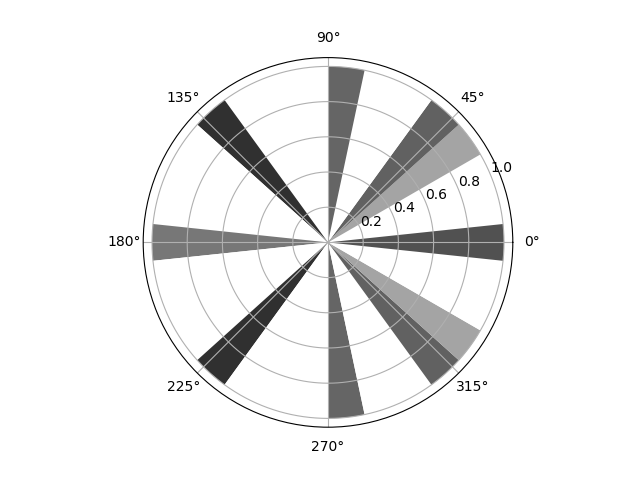}
\end{minipage}%
\begin{minipage}{.33\textwidth}
  \centering
  \includegraphics[height=1.5in, width=2.0in]{graphs/p_3_rel.png}
\end{minipage}
\begin{minipage}{.33\textwidth}
  \centering
  \includegraphics[height=1.5in, width=2.0in]{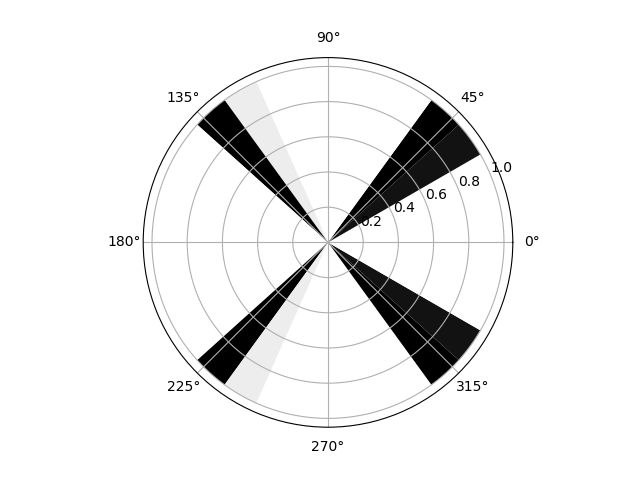}
\end{minipage}
\caption{Dependence of sensing direction on $c_1$ and $\mu$ when wind blows $\to$. From left to right: (1) $c_1 = 0.1, \mu = 0.1$; (2) $c_1 = 0.1, \mu = 1.0$; (3) $c_1 = 1.0, \mu = 1.0$. }\label{algo_fig:exact_c}
\end{figure}

\begin{figure}[H]
\centering
\begin{minipage}{.22\textwidth}
  \centering
  \includegraphics[height=1.2in, width=1.5in]{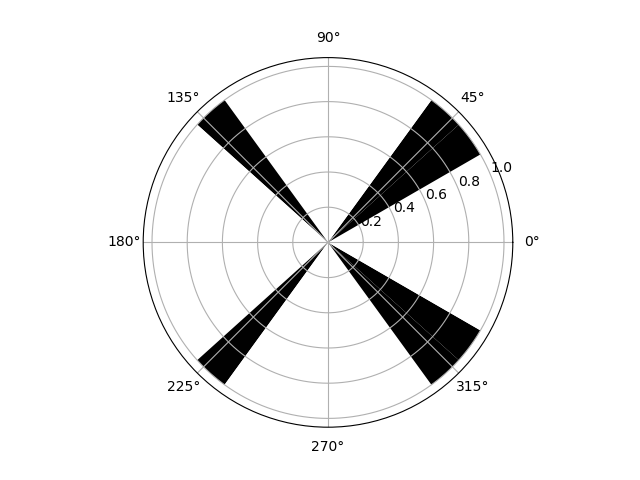}
\end{minipage}%
\begin{minipage}{.22\textwidth}
  \centering
  \includegraphics[height=1.2in, width=1.5in]{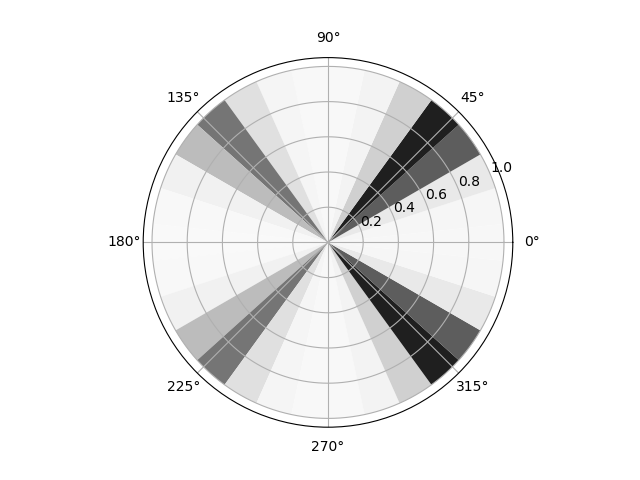}
\end{minipage}
\begin{minipage}{.22\textwidth}
  \centering
  \includegraphics[height=1.2in, width=1.5in]{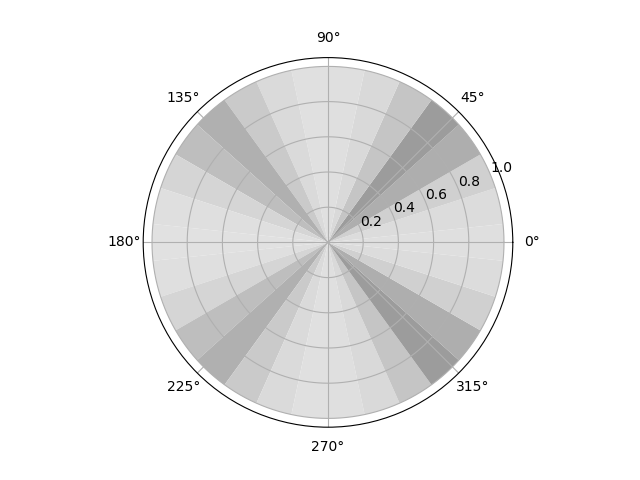}
\end{minipage}
\begin{minipage}{.22\textwidth}
  \centering
  \includegraphics[height=1.2in, width=1.5in]{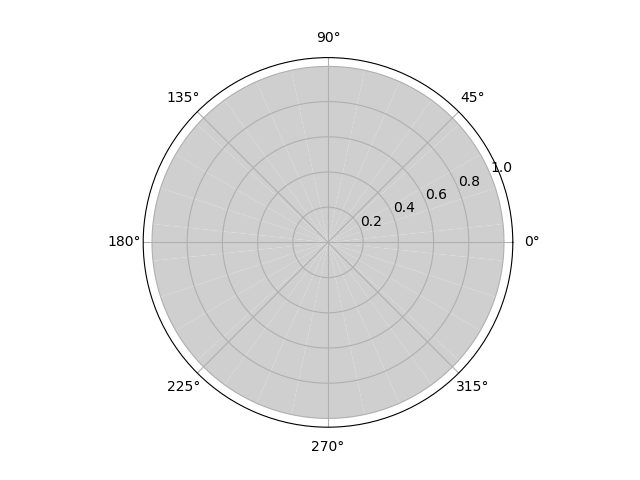}
\end{minipage}
\caption{Dependence of sensing direction on $\mu$ when $c_1=0.1$ and wind blows $\to$. From left to right: $\mu = 5.0, 7.0, 8.0, 10.0$. }\label{algo_fig:exact_mu}
\end{figure}

Based on Figure. \ref{algo_fig:exact_c} and Figure. \ref{algo_fig:exact_mu}, we find that
\begin{itemize}
\item when the wind (with velocity $(c_1, c_2)$) moves faster, more sensing directions are chosen towards the wind;
\item when it is less diffusive (small values of $\mu$), the sensing directions spread out more;
\item when the diffusivity $\mu$ is large, the relaxed sensing weights gets blurred.
\end{itemize}

With SQP, we are able to run problems of larger sizes ($n_d = n_r = 80$) and we change the wind direction form $\to$ to $ \nearrow$. The relaxed sensing directions are given below, which confirms that more sensing directions should be selected towards the wind when it moves faster.

\begin{figure}[H]
\centering
\begin{minipage}{.33\textwidth}
  \centering
  \includegraphics[height=1.7in, width=2.2in]{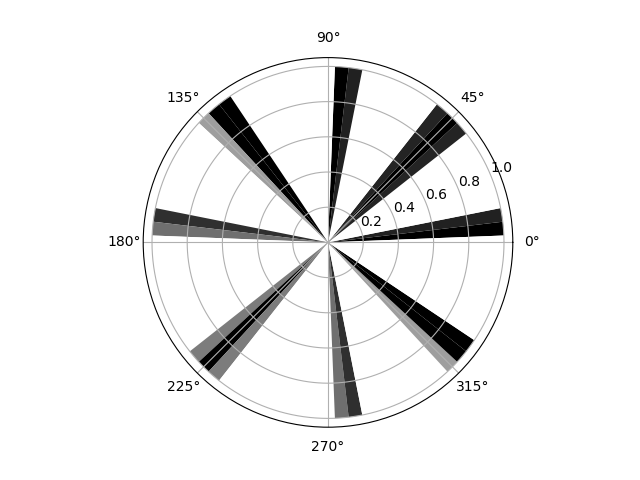}
\end{minipage}%
\begin{minipage}{.33\textwidth}
  \centering
  \includegraphics[height=1.7in, width=2.2in]{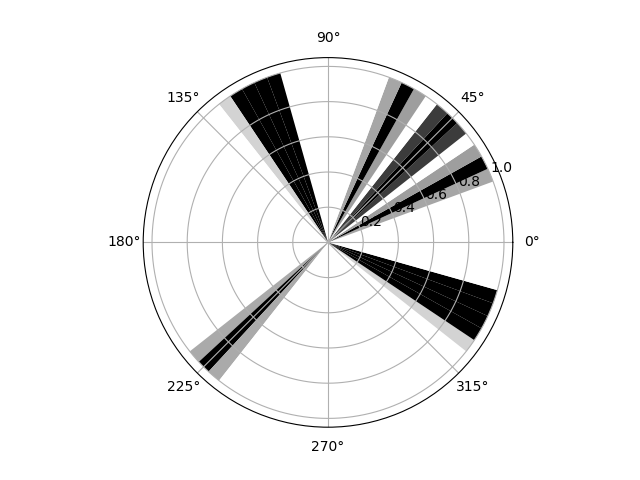}
\end{minipage}
\begin{minipage}{.33\textwidth}
  \centering
  \includegraphics[height=1.7in, width=2.2in]{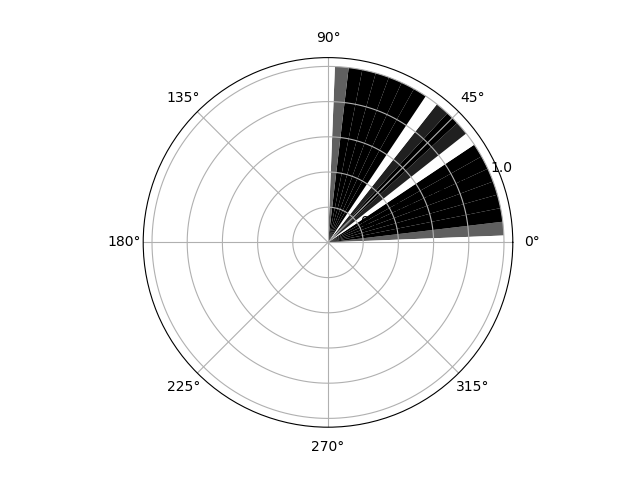}
\end{minipage}
\caption{Sensing direction for increasing wind speed with $\mu=0.1$, wind direction$ \nearrow$. From left to right: (1) $c_1 = c_2 = 0.1$; (2) $c_1 = c_2 = 0.5$; (3) $c_1 = c_2 = 1.0$.}\label{algo_fig:sqp_c}
\end{figure}

From the solution to the advection-diffusion equation, we know that for a larger value of $\mu$, $p$ imposes less effect on the sensing directions. But when $\mu$ is small, such as 0.1, the design is likely to depend on $p$, and adding $p$ makes the design more ``diffusive'', and the selected directions covers a wider range of angles, see Figure \ref{algo_fig:sqp_p}.

\begin{figure}[H]
\centering
\begin{minipage}{.33\textwidth}
  \centering
  \includegraphics[height=1.7in, width=2.2in]{graphs/c_1mu_e-1_16.png}
\end{minipage}
\begin{minipage}{.33\textwidth}
  \centering
  \includegraphics[height=1.7in, width=2.2in]{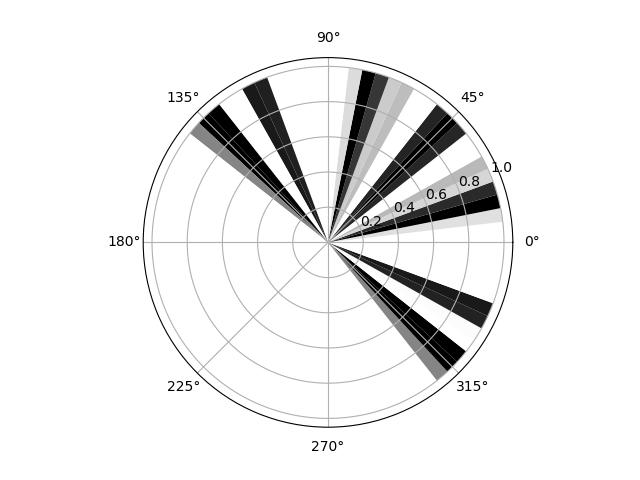}
\end{minipage}%
\begin{minipage}{.33\textwidth}
  \centering
  \includegraphics[height=1.7in, width=2.2in]{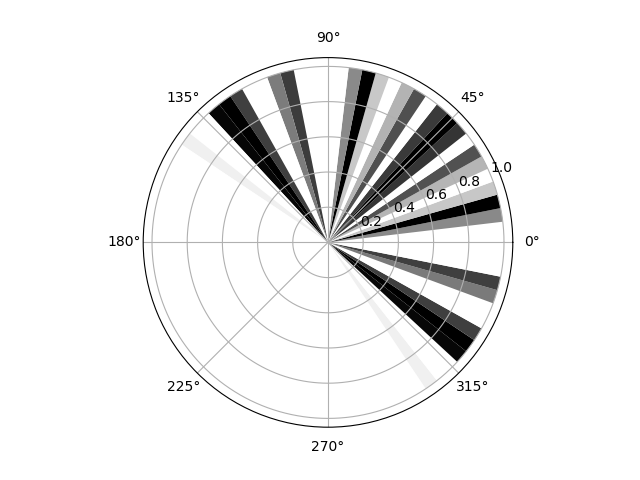}
\end{minipage}
\captionof{figure}{Sensing direction for increasing $p$ with $c_1=c_2=1.0$ and $\mu = 0.1$, wind direction$ \nearrow$. From left to right: $p = 3, 5, 10$.}\label{algo_fig:sqp_p}
\end{figure}

\section{Discussion}\label{sec:disc}

We present an interpolation-based SQP algorithm to solve a convex optimization problem stemming from optimal sensor placement. Most of the algorithm is implemented by hand in Julia, except the singular value decomposition and eigenvalue decomposition. 

The algorithm returns a reasonably good solution after the first several iterations. In the Hessian approximation, $H_s$ is positive semi-definite, which ensures the step returned by each iteration is a descent direction of the objective. However, the decay in the objective can be slow after several initial iterations due to the low-rank structure of $H_s$. We encounter this in the LIDAR problem, if we use a more stringent tolerance $\epsilon$ in the stopping criterion, the iteration number can increase significantly, although the gain in the objective is very limited. This user-defined $\epsilon$ depends on the particular setting of the problems we are trying to solve.

One way to improve the current algorithm is to incorporate high performance computing, since the algorithm involves lots of matrix-vector and vector-vector multiplications. For example, we construct the $F$ matrix, (large size, low-rank, dense) on different processors and run the linear algebra in parallel. The parallel algorithm will consider partitioning the matrix and related computations and how to share information across the processors (see \citep{gallivan1990parallel}), which is the main direction for future work. 

\bibliographystyle{gOMS}
\bibliography{ref}

\appendices
\section{Sum-up Rounding (SUR) Strategy}\label{app:sur}
The basic SUR strategy to construct a binary vector $w_{int} = (w_{int}^1,\cdots,w_{int}^n)$ from $w_{rel}=(w_{rel}^1,\cdots,w_{rel}^n)$ is given by:

\begin{equation}\label{intro_eq:roundup}
w_{int}^i=
\begin{cases}
1, & \mbox{if } \displaystyle \sum_{k=0}^i w_{rel}^i-\sum_{k=0}^{i-1} w_{int}^i\ge 0.5\\
0, & \mbox{otherwise.}
\end{cases}
\end{equation}
for $i=1,...,n$. For the extension of SUR to multiple dimensions, see a compatible two-level decomposition scheme in \citep[\S3.2]{sur_oed}.

\section{The SQP Algorithm for D-optimal Design}\label{sup:d-opt}
The algorithm is very similar to the one for A-optimal design, except that the gradient and Hessian for D-optimal design objective function are different.
\subsection*{Gradient of $\log\det$ objective}
First we find the derivatives to the $\log\det$ of $\postcov$:
\begin{equation}
\frac{\partial\log\det(\postcov)}{\partial w_i} = -tr\Big((F^TWF+I_n)^{-1}f_if_i^T \Big)= -f_i^T(F^TWF+I_n)^{-1}f_i
\end{equation}

\subsection{Hessian of $\log\det$ objective}
The $(i,j)^{th}$ entry of the Hessian matrix is
\begin{equation}
H_{ij} = \frac{\partial^2 \log\det(\postcov)}{\partial w_i\partial w_j} = \Big( f_i^T(F^TWF + I_n)^{-1} f_j \Big)^2.
\end{equation}

\subsection{Approximation of gradient and Hessian}
We give details for the one-dimensional case, and the procedure can be extended trivially to rectangle domains in multiple dimensions using tensor product. For the input domain, let $\{\bar{x_i}\}_{i=1}^N$ be the $N$ Chebyshev interpolation points, $\{x_i\}_{i=1}^n$ be the $n$ discretization points on the mesh and note $N = O(\log(n))$, and $C_x\in\mathbb{R}^{n\times N}$ be the matrix of interpolation coefficients. Similarly, we can construct $C_y$ for the output domain. We approximate $F$ by
\[
F_s = C_x^T\bar{F}C_y
\]
where $\bar{F}\in\mathbb{R}^{N\times N}$ is the matrix of $f(\bar{x}_i, \bar{x}_j)$ evaluated at interpolation points. Next, we construct $M\in\mathbb{R}^{N\times N}$ by setting its $(i,j)^{th}$ entry to be
\[
\bar{f}_i^T(F_s^TWF_s + I_n)^{-1}\bar{f}_j
\]
where $\bar{f}_i$ is the $i^{th}$ column of $C_y^TF_s^T$. The $i^{th}$ gradient is approximated by 
\[
g_i\approx c_x(x_i)^TMc_x(x_i).
\] To approximate the Hessian $H$, let $\bar{H} \in\mathbb{R}^{N\times N}$ and $\bar{H}(i, j) = M(i,j)^2$, and then
\[
H\approx C_x^T\bar{H}C_x.
\]
Once we figure out the gradient and Hessian approximations, it should be clear on the implementation of the SQP Algorithm \ref{sqp_algo:sqp} in \S\ref{sec:sqp}.

\subsection{Error analysis}
All the error analysis in \S\ref{algo_sec:error} applies to the log-determinant case, and we will only need to modify one step in Claim \ref{algo_claim:obj1}:
\begin{align*}
|\phi(w) - \phi_s(w)| &= \big|\sum_{i=1}^n\log\frac{1}{1 + \lambda^n_i} - \sum_{i=1}^n\log\frac{1}{1 + \lambda^{n,s}_i} \big| \\
& \le \big| \sum_{i=1}^N \big(\log\frac{1}{1+\lambda^n_i} - \log\frac{1}{1+\lambda^{n,s}_i} \big) \big| + \big| \sum_{i=N+1}^n \big(\log\frac{1}{1+\lambda^n_i} -\log \frac{1}{1+\lambda^{n,s}_i} \big) \big| \\
& = \sum_{i=1}^N\big|\log(1+\lambda^{n,s}_i)-\log(1+\lambda^{n}_i)\big| + \sum_{i=N+1}^N\log(1+\lambda^n_i) \\
& \le \sum_{i=1}^N|\lambda^n_i-\lambda^{n,s}_i| + \sum_{i=N+1}^n\lambda^n_i.
\end{align*}

\end{document}